\def\etal.{et\penalty50\ al.}
\theoremstyle{plain}
\newtheorem{theorem}{Theorem}[section]
\newtheorem{lemma}[theorem]{Lemma}
\newtheorem{corollary}[theorem]{Corollary}
\theoremstyle{definition}
\newtheorem{definition}{Definition}[section]
\theoremstyle{remark}
\newtheorem{question}{Question}[section]
\newtheorem{observation}[question]{Observation}
\theoremstyle{plain}
\newtheorem*{theorem*}{Theorem}
\providecommand{\keywords}[1]
{
  \small	
  \textbf{\textit{Keywords---}} #1
}
\DeclareMathOperator{\ttime}{ttime} 
\author{Hovhannes A. Harutyunyan}
\author{Kamran Koupayi} 
\author{Denis Pankratov}
\affil{Department of Computer Science and Software Engineering\\ Concordia University\\ Montreal, Canada\\ \url{haruty@cs.concordia.ca}, \url{kamran.koupayi@gmail.com}, \url{denis.pankratov@concordia.ca}}
\title{Temporal Separators with Deadlines\thanks{This research is supported by NSERC, Canada.}} 
\date{}
\begin{document}

\maketitle

\begin{abstract}
  We study temporal analogues of the Unrestricted Vertex Separator problem from the static world. An $(s,z)$-temporal separator is a set of vertices whose removal disconnects vertex $s$ from vertex $z$ for every time step in a temporal graph. The $(s,z)$-Temporal Separator problem asks to find the minimum size of an $(s,z)$-temporal separator for the given temporal graph. The $(s,z)$-Temporal Separator problem is known to be $\mathcal{NP}$-hard in general, although some special cases (such as bounded treewidth) admit efficient algorithms \cite{fluschnik2020temporal}.

  We introduce a generalization of this problem called the $(s,z,t)$-Temporal Separator problem, where the goal is to find a smallest subset of vertices whose removal eliminates all  temporal paths from $s$ to $z$ which take less than $t$ time steps. Let $\tau$ denote the number of time steps over which the temporal graph is defined (we consider discrete time steps).  We characterize the set of parameters $\tau$ and $t$ when the problem is $\mathcal{NP}$-hard and when it is polynomial time solvable. Then we present a $\tau$-approximation algorithm for the $(s,z)$-Temporal Separator problem and convert it to a $\tau^2$-approximation algorithm for the $(s,z,t)$-Temporal Separator problem. We also present an inapproximability lower bound of $\Omega(\ln(n) + \ln(\tau))$  for the $(s,z,t)$-Temporal Separator problem assuming that $\mathcal{NP}\not\subset\mbox{\sc Dtime}(n^{\log\log n})$. Then we consider three special families of graphs: (1) graphs of branchwidth at most $2$, (2) graphs $G$ such that the removal of $s$ and $z$ leaves a tree, and (3) graphs of bounded pathwidth. We present polynomial-time algorithms to find a minimum $(s,z,t)$-temporal separator for (1) and (2). As for (3), we show a polynomial-time reduction from the Discrete Segment Covering problem with bounded-length segments to the $(s,z,t)$-Temporal Separator problem where the temporal graph has bounded pathwidth. 

\end{abstract}

\keywords{Temporal graphs, dynamic graphs, vertex separator, vertex cut, separating set, deadlines, inapproximability, approximation algorithms} 


\section{Introduction}
\label{sec:intro}

Suppose that you have been given the task of deciding how robust a train system of a given city is with respect to station closures. For instance, is it possible to disconnect the two most visited places, e.g., the downtown and the beach, by shutting down $5$ train stations in the city? Does an efficient algorithm even exist? If not, what can we say about special classes of graphs? These are central questions of interest in this work.

More formally, we model the scenario as a graph problem. An important component missing from the classical graph theory is the ability of the graph to vary with time. The trains run on a schedule (or at least they are supposed to -- for simplicity, we assume a perfectly punctual train system). Thus, it is not accurate to say that there is an edge between station $A$ and station $B$ just because there are tracks connecting them. It would be more accurate to say that if you arrive at  $A$ at some specific time $t$ then you could get to  $B$ at some other time $t' > t$, where $t$ is when the train arrives at station $A$ and $t'$ is the time when this train reaches station $B$. In other words, we can consider the edge from $A$ to $B$ as being present at a particular time (or times) and absent otherwise. This is an important point for the robustness of train networks, since it could be that due to incompatibility of certain train schedules the train network could become disconnected by shutting down even fewer stations than we otherwise would have thought if we didn't take time schedules into account.

The notion of graphs evolving with time has several formal models in the research literature \cite{anagnostopoulos2012algorithms, rossi2013modeling}. First of all, there is an area of online algorithms~\cite{albers2003online} where the graph is revealed piece by piece (thus the only allowable changes are to add objects or relations to the graph) and we need to make irrevocable decisions towards some optimization goal as the graph is being revealed. Secondly, streaming and semi-streaming graph algorithms deal with graphs that are revealed one piece at a time similar to online algorithms, but the emphasis is on memory-limited algorithms \cite{feigenbaum2005graph, feigenbaum2004graph}. Thus, in streaming one does not have to make irrevocable decisions, but instead tries to minimize the memory size necessary to answer some queries at the end of the stream. Thirdly, there is a notion of dynamic graph algorithms where the emphasis is on designing efficient data structures to support certain queries when the graph is updated by either adding or removing vertices or edges \cite{sleator1983data}. The goal is to maintain the data structures and answer queries, such as ``are nodes $u$ and $v$ connected?'', in the presence of changes more efficiently than recomputing the answer from scratch on every query. It is evident that none of these models is a good fit for our question: the train system is known in advance and it is not frequently updated (some cities that shall remain unnamed take decades to add a single station to the system). Fortunately, there is yet another model of graphs changing with time that has recently gotten a lot of attention and it happens to capture our situation perfectly. The model is called a temporal graph. In this work, we focus on undirected temporal graphs that have a fixed node set but whose edge sets change in discrete time units, all of which are known in advance. Other temporal graph models where changes to nodes are allowed and where time is modelled with the continuous real line have been considered in the research literature but they are outside of the scope of this work. We typically use $\tau$ to indicate the total number of time steps over which a given temporal graph is defined. For example, if we model the train system as a temporal graph with one minute-granularity and the schedule repeats every $24$ hours then the temporal graph would have $\tau = (24 H) \times (60 M/H) = 1440 M$ time steps in total. For emphasis, when we need to talk about non-temporal graphs and bring attention to their unchanging nature we shall call them ``static graphs.''

We study temporal analogues of the Unrestricted Vertex Separator problem from the static world. An $(s,z)$-temporal separator is a set of vertices whose removal disconnects vertex $s$ from vertex $z$ for every time step in a temporal graph. The $(s,z)$-Temporal Separator problem asks to find the minimum size of an $(s,z)$-temporal separator for the given temporal graph. The $(s,z)$-Temporal Separator problem is known to be $\mathcal{NP}$-hard in general \cite{zschoche2020complexity}, although some special cases (such as bounded treewidth) admit efficient algorithms \cite{fluschnik2020temporal}. This question can be thought of as a mathematical abstraction of the robustness of the train network of a city question posed at the beginning of this section. The $(s,z)$-Temporal Separator problem asks you to eliminate all temporal paths between $s$ and $z$ by removing some nodes. Observe that, practically speaking, in real life, one doesn't actually have to eliminate all temporal paths between $s$ and $z$ -- one would have to remove only reasonable temporal paths between $s$ and $z$. Which paths would be considered unreasonable? We consider paths taking too much time as unreasonable. For example, if normally it takes $30$ minutes to get from downtown to the beach, then eliminating all routes that take at most $4$ hours would surely detract most downtown dwellers from visiting the beach. Motivated by such considerations, we introduce a generalization of the $(s,z)$-Temporal Separator problem called $(s,z,t)$-Temporal Separator problem, where the goal is to find the smallest subset of vertices whose removal eliminates all  temporal paths from $s$ to $z$ which takes less than $t$ time steps. Observe that setting $t = \tau$ captures the $(s,z)$-Temporal Separator problem as a special case of the $(s,z,t)$-Temporal Separator problem. Our results can be summarized as follows:
  
   In Section~\ref{sec:general-graphs-hardness}, we present a characterization of parameters $t$ and $\tau$ when the problem is $\mathcal{NP}$-hard. We also present an inapproximability lower bound of $\Omega(\ln(n) + \ln(\tau))$  for the $(s,z,t)$-Temporal Separator problem assuming that $\mathcal{NP}\not\subset\mbox{\sc Dtime}(n^{\log\log n})$. In Section~\ref{section:approximation}, we present a $\tau$-approximation algorithm for the $(s,z)$-Temporal Separator problem, and we convert it to a $\tau^2$-approximation algorithm for $(s,z,t)$-Temporal Separator problem. 
   
   In Section~\ref{sec:bounded_branchwidth}, we present a polynomial-time algorithm to find a minimum $(s,z,t)$-temporal separator on temporal graphs whose underlying graph (see Section~\ref{sec:prelim}) has branchwidth at most $2$. In Section~\ref{sec:tree_like}, we present another polynomial-time algorithm for temporal graphs whose underlying graph becomes a tree after removal of $s$ and $z$. In Section~\ref{sec:disc_reduction}, we show a polynomial-time reduction from the Discrete Segment Covering problem with bounded-length segments to the $(s,z,t)$-Temporal Separator problem where the temporal graph has bounded pathwidth. Therefore, solving the $(s,z,t)$-Temporal Separator problem on a temporal graph whose underlying graph has bounded pathwidth is at least as  difficult as solving the Discrete Segment Covering problem where lengths of all segments are bounded.



\section{Preliminaries}
\label{sec:prelim}

Temporal graphs (also known as dynamic, evolving \cite{ferreira}, or time-varying \cite{flocchini,casteig} graphs) are graphs whose edges are active at certain points in time. A temporal graph $G = (V, E, \tau)$ contains a set of vertices $V$, and a set of edges $E \subseteq V \times V \times [\tau]$ \footnote{Notation $[n]$ stands for $\{1,2,\dots,n\}$.}. So each edge\footnote{We only consider undirected graphs in this work, i.e. no self-loops and $(u,v,t) \in E$ if and only if $(v,u,t) \in E$}. $e \in E$ contains two vertices of $V$ and a time label $t \in [\tau]$ indicating a time step at which the edge is active. A graph $G_{\downarrow} = (V, E')$ where $E'$ contains every edge $e$ that is active at least once in the temporal graph $G$ is called the \emph{underlying graph} (alternatively, the \emph{footprint}) of the temporal graph $G$. A static graph representing active edges for a specific time $t$ is called the layer of the temporal graph at that time and is denoted by $G_t$. Some other ways of modelling temporal graphs could be found in \cite{michail2016introduction}. We refer to $V(G)$ and $E(G)$ as the set of vertices and edges, respectively, of a graph $G$ (either temporal or static). Also for any subset $U \subseteq V(G)$ we refer to the set of all edges in the subgraph induced by $U$ as $E(U)$, and for any node $v \in V$ we use $E(v)$ to denote the set of all edges incident on $v$. We also use $\tau(G)$ to refer to the number $\tau$ of time labels of the temporal graph $G$. 

A temporal path in a temporal graph is a sequence of edges such that (1) it is a valid path in the underlying graph, and (2) the corresponding sequence of times when the edges are active is non-decreasing. Formally, a sequence $P = [(u_1,v_1,t_1), (u_2,v_2,t_2), \ldots, (u_k,v_k,t_k)]$ of edges in a temporal graph $G$ is called an \emph{$(s,z)$-temporal path} if $s = u_1, v_1 = u_2, \dots, v_{k-1} = u_k, v_k = z$ and $t_1 \leq t_2 \leq \dots \leq t_k$. If the sequence of times is in strictly increasing order, the temporal path is called \emph{strict}. \emph{Travelling time} of $P$, denoted by $\ttime(P)$, is defined as $\ttime(P) = t_k-t_1+1$, i.e., the time it takes to travel from $s$ to $z$. If $\ttime(P) \le t$ then we refer to $P$ as an $(s,z,t)$-temporal path. A temporal graph $G$ is \emph{connected} if for any pair of vertices $s,z \in V(G)$ there is at least one temporal path from $s$ to $z$. A temporal graph $G$ is \textit{continuously connected} if for every $i \in [\tau(G)]$ layer $G_i$ is connected. 

We distinguish between three types of temporal paths: (1) \emph{shortest $(s,z)$-temporal path}: a temporal path from $s$ to $z$ that minimizes the number of edges; (2) \emph{fastest $(s,z)$-temporal path}: a temporal path from $s$ to $z$ that minimizes the traveling time; (3) \emph{foremost $(s,z)$-temporal path}: a temporal path from $s$ to $z$ that minimizes the arrival time at destination. \emph{Temporal distance} from node $s$ to node $z$ is equal to the traveling time of the fastest $(s,z)$-temporal path.

A set $S \subseteq V - \{s,z\}$ is called a \emph{(strict) $(s,z)$-temporal separator} if the removal of vertices in set $S$ removes all (strict) temporal paths from $s$ to $z$.  The \emph{(strict) $(s,z)$-Temporal Separator problem} asks to find the minimum size of a (strict) $(s,z)$-temporal separator in a given temporal graph $G$. This problem has been studied before (see Section~\ref{sec:related}). In this work, we propose a new problem that is based on the notion of $(s,z,t)$-temporal paths. We define a set of vertices $S$ to be a \emph{(strict) $(s,z,t)$-temporal separator} if every (strict) $(s,z,t)$-temporal path contains at least one vertex in $S$, i.e., removal of $S$ removes all (strict) $(s,z,t)$-temporal paths. Thus, the new problem, which we refer to as the \emph{(strict) $(s,z,t)$-Temporal Separator problem} is defined as follows: given a temporal graph $G$, a pair of vertices $s,z \in V(G)$, and a positive integer $t$, the goal is to compute the minimum size of a $(s,z,t)$-temporal separator in $G$.

\begin{lemma}
\label{lemma:temporalPath}
Given a temporal graph $G = (V,E,\tau)$ and two distinct vertices $s$ and $z$ as well as an integer $t$, it is decidable in time $O(|S||E|)$ if there is a $(s,z,t)$-temporal path in $G$ where $S = \{t' \;|\; \exists u: (s,u,t')\in E \}$.
\end{lemma}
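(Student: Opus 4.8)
The plan is to reduce the existence of an $(s,z,t)$-temporal path to a small number of ordinary temporal reachability computations, one for each admissible starting time. The key observation is that the first edge of any $(s,z)$-temporal path is incident on $s$, so its time label lies in $S$. Hence if an $(s,z,t)$-temporal path exists, its first edge departs at some time $t_1 \in S$, and the deadline constraint $\ttime(P) = t_k - t_1 + 1 \le t$ forces every edge of $P$ to carry a label in the window $[t_1, t_1 + t - 1]$. I would therefore iterate over all $t_1 \in S$ and, for each, test whether $z$ is reachable from $s$ by a temporal path whose first edge departs at time $t_1$ and all of whose labels lie in $[t_1, t_1+t-1]$; the graph contains an $(s,z,t)$-temporal path if and only if this test succeeds for at least one $t_1$.

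For a fixed $t_1$ I would compute reachability by sweeping the layers in increasing time order. Maintain a set $R$ of vertices already known to be reachable from $s$ within the current prefix of the window, initialized to $R = \{s\}$. For $\theta = t_1, t_1+1, \dots, t_1 + t - 1$, inspect the layer $G_\theta$ and add to $R$ every vertex lying in the same connected component of $G_\theta$ as some vertex already in $R$. Because $R$ persists across time steps, this correctly models waiting at a vertex, and because $R$ grows by full connected components inside a single layer, it correctly models the fact that a (non-strict) temporal path may traverse several edges bearing the same label. The test returns ``yes'' as soon as $z$ enters $R$.

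Correctness then splits into the usual two directions. For soundness, any vertex that enters $R$ at step $\theta$ is the endpoint of a temporal path from $s$ using only labels in $[t_1,\theta]$, so if $z \in R$ at some step $\theta \le t_1+t-1$ the corresponding path has $\ttime \le \theta - t_1 + 1 \le t$. For completeness, given an $(s,z,t)$-temporal path $P$ with first label $t_1$, a straightforward induction on the edges of $P$ shows that each successive vertex is absorbed into $R$ by the time the layer carrying its incoming edge is processed, so $z \in R$ in the iteration for $t_1$. For the running time, the layers partition $E$, so one sweep touches each edge once and computes the relevant connected components (by BFS/DFS or union-find) in time linear in the number of edges in each layer, giving $O(|E|)$ per value of $t_1$; bucketing the edges by time label is an $O(|E| + \tau)$ preprocessing step. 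Summing over the $|S|$ choices of $t_1$ yields the claimed $O(|S|\,|E|)$ bound. The one point requiring care -- and the natural place a naive edge-by-edge relaxation would go wrong -- is this within-layer step: for non-strict paths one must take entire connected components of $G_\theta$ rather than relaxing single edges, while still charging only linear time to each layer.
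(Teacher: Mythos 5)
Your proposal is correct, but it takes a different route from the paper in the sense that the paper does not actually spell out an algorithm: its proof of Lemma~\ref{lemma:temporalPath} is a one-line citation of the single-source fastest-temporal-path algorithms of Wu et al.\ and Bui-Xuan et al., which run in $O(|S|(|V|+|E|))$ time, followed by the remark that one computes the fastest $(s,z)$-temporal path and compares its travelling time to $t$. You instead reconstruct such an algorithm from first principles: enumerate the candidate departure times $t_1\in S$, and for each one run a layer-by-layer reachability sweep restricted to the window $[t_1,t_1+t-1]$, absorbing whole connected components of each layer into the reachable set. This is morally the same ``one reachability pass per candidate start time'' strategy that underlies the cited fastest-path algorithms, and your accounting (layers partition $E$, so each sweep is linear in $|E|$ after bucketing edges by label) matches the stated bound. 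Two points in your write-up deserve a remark. First, you correctly identify the one delicate step for non-strict paths---that a single layer may contribute several consecutive edges of the path, so one must close under connected components within a layer rather than relax single edges---which the paper never discusses. Second, your prose says the test for a fixed $t_1$ requires the first edge to depart exactly at time $t_1$, but the sweep you describe actually accepts any path all of whose labels lie in $[t_1,t_1+t-1]$; this discrepancy is harmless, since any such path has travelling time at most $t$ (soundness) and any genuine $(s,z,t)$-temporal path is caught in the iteration corresponding to its true first label (completeness), but you should state the invariant the sweep actually certifies. With that small clarification, your argument is a complete, self-contained substitute for the paper's citation-based proof.
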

\begin{proof}
\cite{wu2014path} and \cite{xuan2003computing} present an algorithm that computes fastest paths from a single source $s$ to all of the vertices in $O(|S|(|V| + |E|))$. We could ignore isolated vertices, then we could compute a fastest path from $s$ to $z$ in $G$ and check if its travelling time is at least $t$.
\end{proof}

Branch decomposition and branchwidth of a graph is defined as follows.

\begin{definition}[Branch Decomposition] \textsc{\cite{deng2013multi}}
Given a graph $G = (V, E)$, a branch decomposition is a pair $(T, \beta)$, such that
\begin{itemize}
    \item $T$ is a binary tree with $|E|$ leaves, and every inner node of T has two children.
    \item $\beta$ is a mapping from $V(T)$ to $2^E$ satisfying the following conditions:
    \begin{itemize}
        \item For each leaf $v \in V(T)$, there exists $e \in E(G)$ with $\beta(v) = \{e\}$, and there are no $v, u \in V(T), v \neq u$ such that $\beta(v) = \beta(u)$.
        \item  For every inner node $v \in V (T)$ with children $v_l, v_r, \beta(v) = \beta(v_l)\cup\beta(v_r)$;
    \end{itemize}
\end{itemize}
\end{definition}

\begin{definition}[Boundary] \textsc{\cite{deng2013multi}}
Given a graph $G = (V, E)$, for every set $F \subseteq E$, the boundary $\partial F = \{v | v$ is incident to edges in $F$ and $E \backslash F\}$.
\end{definition}
\begin{definition}[Width of a Branch Decomposition] \textsc{\cite{deng2013multi}}
Given a branch decomposition $(T,  \beta)$ of $G = (V, E)$, the width of this decomposition is $max\{|\partial \beta(v)| \;|\;v \in V (T)\}$.
\end{definition}
The branchwidth $bw(G)$ of $G$ is defined as the minimum width of a branch decomposition of $G$ \textsc{\cite{deng2013multi}}. We note that for any fixed $k$ there is a linear time algorithm to check if a graph has branchwidth $k$, and if so, the algorithm outputs a branch decomposition of minimum width~\cite{bodlaender1997}.

Path decomposition and pathwidth of a graph are defined as follows.

\begin{definition}[Path Decomposition] \cite{ROBERTSON198339}
Given a graph $G = (V, E)$, a path decomposition of $G$ is a pair $(P, \beta)$, such that
\begin{itemize}
    \item $P$ is a path  with nodes $a_1, \ldots a_m$.
    \item $\beta$ is a mapping from $\{a_1, \ldots, a_m\}$ to $2^E$ satisfying the following conditions:
    \begin{itemize}
        \item For $e \in E(G)$ there exists $a_i$ such that vertices of $e$ appear in $\beta(a_i)$.
        \item  For every $v \in V(G)$ the set of $a_i$, such that $v$ appears in $\beta(a_i)$, forms a subpath of $P$.
    \end{itemize}
\end{itemize}
The width of a decomposition $(P, \beta)$ is $\max_{a \in V(P)} |\beta(a)|-1$. The pathwidth of a graph $G$ is the minimum width of a path decomposition of $G$.
\end{definition}


\section{Related Work}
\label{sec:related}

Enright et al. in \cite{enright2018deleting} adopt a simple and natural model for time-varying networks which is given with time-labels on the edges of a graph, while the vertex set remains unchanged. This formalism originates in the foundational work of Kempe et al. \cite{kempe2002connectivity}. There has already been a lot of work on temporal graphs, too much to give a full overview. Thus, in this section, we focus only on the  results most relevant to our work.

The fastest temporal path is computable in polynomial time, see, e.g.~\cite{xuan2003computing, wu2016efficient, wu2014path}. A nice property of the foremost temporal path is that it can be computed efficiently. In particular, there is an algorithm that, given a source node $s \in V$ and a time $t_{start}$, computes for all $w \in V \setminus\{ s \}$ a foremost $(s,w)$-temporal path from the time $t_{start}$ \cite{mertzios}.
The running time of the algorithm is $O(n\tau^3 + |E|)$. 
It is worth mentioning that this algorithm takes as input the whole temporal graph $G$. Such algorithms are known as offline algorithms in contrast to online algorithms in which the temporal graph is revealed on the fly. 
The algorithm is essentially a temporal translation of the breadth-first search (BFS) algorithm (see e.g. \cite{clrs} page 531).

While the Unrestricted Vertex Separator problem is polynomial time solvable in the static graph world (by reducing to the Maximum Flow problem), the analogous problem in the temporal graph world, namely, the $(s,z)$-Temporal Separator problem, was shown to be $\mathcal{NP}$-hard by Kempe et al.~\cite{kempe2002connectivity}. Zschoche et al. \cite{zschoche2020complexity} investigate the $(s,z)$-Temporal Separator and strict $(s,z)$-Temporal Separator problems on different types of temporal graphs. A central contribution in \cite{zschoche2020complexity} is to prove that both $(s,z)-$Temporal Separator and Strict $(s,z)$-Temporal Separator are $\mathcal{NP}$-hard for all $\tau \geq 2$ and $\tau \geq 5$, respectively, strengthening a result by Kempe et al. \cite{kempe2002connectivity} (they show $\mathcal{NP}$-hardness of both variants for all $\tau \geq 12$) \cite{zschoche2020complexity}.

Fluschnik et al. \cite{fluschnik2020temporal} show that $(s,z)$-Temporal Separator remains $\mathcal{NP}$-hard on many restricted temporal graph classes: temporal graphs whose underlying graph falls into a class of graphs containing complete-but-one graphs (that is, complete graphs where exactly one edge is missing), or line graphs, or temporal graphs where each layer contains only one edge. In contrast, the problem is tractable if the underlying graph has bounded treewidth, or if we require each layer to be a unit interval graph and impose suitable restrictions on how the intervals may change over time, or if one layer contains all others (grounded), or 
 if all layers are identical (1-periodic or 0-steady), or 
 if the number of periods is at least the number of vertices. It is not difficult to show that this problem is fixed-parameter tractable when parameterized by $k +l$, where $k$ is the solution size and $l$ is the maximum length of a temporal $(s, z)$-path.


Lastly, we note that the classical Vertex Separator problem from the static world is often stated as asking to find a vertex separator such that after its removal the graph is partitioned into two blocks (one containing $s$ and one containing $z$) of roughly equal size\footnote{That is why earlier we referred to a static world problem of interest as the Unrestricted Vertex Separator problem to emphasize that there is no balancedness requirement.}. This ``balanced'' separator restriction makes the problem $\mathcal{NP}$-hard. The temporal separator problems considered in this work do not have such a restriction, and as discussed they  are hard problems due to the temporal component. There is a lot of research on the Vertex Separator problem, but since our versions do not have this ``balancedness'' restriction, we do not discuss it in detail. An interested reader is referred to~\cite{althoby2019} and references therein.


\section{Temporal Separators with Deadlines on General Graphs}
\label{sec:general-graphs}


\subsection{Hardness of Exact and Approximate Solutions}
\label{sec:general-graphs-hardness}

Zschoche et al. \cite{zschoche2020complexity} show that the $(s,z)$-Temporal Separator problem is $\mathcal{NP}$-hard on a temporal graph $G = (V,E,\tau)$ if $\tau \geq 2$ (and it is in $\mathcal{P}$ if $\tau=1$). So, it is obvious that the  $(s,z,t)$-Temporal Separator problem is $\mathcal{NP}$-hard if $t \geq 2$. In this section we strengthen this result by showing that the problem remains $\mathcal{NP}$-hard even when restricted to inputs with $t = 1$ and $\tau \geq 2$. 

Reduction from the minimum satisfiability problem with non-negative variables to $(s,z,1)$-Temporal Separator could be made by adding a path from $s$ to $z$ in layer $G_i$, which contains all the variables in the $i$-th clause. 
So, $(s,z,1)$-Temporal Separator on temporal graphs with a sufficient number of layers is $\mathcal{NP}$-hard. 
However, it is not easy to establish the complexity of $(s,z,t)$-Temporal Separator on temporal graphs with a small number of layers. Here we aim to show that $(s,z,1)$-Temporal Separator remains $\mathcal{NP}$-hard on a temporal graph $G = (V,E,\tau)$ if $\tau$ is equal to $2$. To do that, we construct a reduction from 
the Node Multiway Cut problem. In this problem, one is given a graph $G = (V,E)$ and a set of terminal vertices $Z = \{z_1,z_2, \dots z_k\}$. A multiway cut $S \in V \backslash Z$ is a set of vertices whose removal from $G$ disconnects all pairs of distinct terminals $z_i$ and $z_j$. The goal is to find a multiway cut of minimum cardinality. The Node Multiway Cut problem is $\mathcal{NP}$-hard for $k \geq 3$ \cite{garg1994multiway}. 

\begin{theorem}
\label{thm:smalltime-nonstrict}
For every $t_0\geq 1$, the $(s,z,t)$-Temporal Separator problem is $\mathcal{NP}$-hard on a  temporal graph $G=(V,E, \tau)$ when restricted to inputs with $t=t_0$ and $\tau \geq 2$.
\end{theorem}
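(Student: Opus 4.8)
The plan is to reduce from the Node Multiway Cut problem, which is $\mathcal{NP}$-hard already for $k=3$ terminals $z_1,z_2,z_3$; fixing the number of terminals to a constant is what will let me build a temporal graph with only $\tau=2$ layers. The starting observation is that for the deadline $t_0=1$ any $(s,z,1)$-temporal path $P$ satisfies $\ttime(P)=t_k-t_1+1\le 1$, hence $t_1=t_k$: every relevant path lives entirely inside one layer. Consequently a set $S$ is an $(s,z,1)$-temporal separator if and only if it disconnects $s$ from $z$ in each static layer $G_1$ and $G_2$ \emph{separately}. I will exploit this by designing the two layers so that the two independent $s$--$z$ cut requirements jointly force a multiway cut of $\{z_1,z_2,z_3\}$: in $G_1$ I attach $s$ to $z_1$ and attach $z_2,z_3$ to $z$, so that cutting $s$ from $z$ in $G_1$ is equivalent to separating $z_1$ from $\{z_2,z_3\}$ inside $G$; in $G_2$ I attach $s$ to $z_2$ and attach $z_3$ to $z$, so that cutting $G_2$ is equivalent to separating $z_2$ from $z_3$. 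Together these cover all three pairs $(z_1,z_2),(z_1,z_3),(z_2,z_3)$, which is exactly a multiway cut.

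The main obstacle is that, unlike in Node Multiway Cut, a temporal separator may delete any vertex other than $s$ and $z$ --- in particular a terminal. Since $s$ is attached to $z_1$ in $G_1$ (and to $z_2$ in $G_2$), deleting the single vertex $z_1$ (resp. $z_2$) would cheaply isolate $s$ and let the separator evade a genuine cut. To rule this out I will protect the terminals by a standard blow-up: writing $n=|V(G)|$, replace each $z_i$ by an independent set $Z_i$ of $N=n+1$ copies, each copy inheriting the full neighbourhood $N_G(z_i)$ (I may assume the terminals are pairwise non-adjacent, as the instance is otherwise infeasible), and attach $s$/$z$ to \emph{all} copies of the relevant group rather than to a single vertex. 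Deleting all interior (non-copy) vertices is a valid separator of size $n-3<N$, so a minimum separator $S$ has $|S|<N$; an exchange argument then shows copies are never useful. If $S$ contained a copy $z_i^{j}$, some copy $z_i^{\ast}\notin S$ survives, and any $s$--$z$ path through $z_i^{j}$ in either layer reroutes through $z_i^{\ast}$ (identical neighbourhoods), contradicting that $S$ separates; hence $S\setminus\{z_i^{j}\}$ still separates, and iterating yields a minimum separator using only interior vertices.

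With the terminals protected, I will argue the two optima coincide. Given a multiway cut $C$ of $G$, it separates $z_i$ from $z_j$ for all $i\ne j$, hence separates the copy groups $Z_i$, so $C$ disconnects $s$ from $z$ in both layers and is an $(s,z,1)$-temporal separator of the same size. Conversely, a minimum $(s,z,1)$-temporal separator may be taken to avoid copies, and then disconnecting each layer forces it to separate $Z_1$ from $Z_2\cup Z_3$ and $Z_2$ from $Z_3$, i.e. to be a multiway cut of the same size. This polynomial-time reduction establishes $\mathcal{NP}$-hardness for $t=t_0=1$ and $\tau=2$.

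Finally, for every $t_0\ge 2$ I will take $\tau=2\le t_0$. Since every edge carries a time label in $\{1,2\}$, any temporal path $P$ has $\ttime(P)=t_k-t_1+1\le 2\le t_0$, so \emph{every} temporal path is an $(s,z,t_0)$-temporal path and the $(s,z,t_0)$-Temporal Separator problem coincides with the ordinary $(s,z)$-Temporal Separator problem on two-layer graphs, which is $\mathcal{NP}$-hard by Zschoche et al.~\cite{zschoche2020complexity}. Together with the case $t_0=1$ this covers all $t_0\ge 1$, and I expect the blow-up together with its exchange argument in the $t_0=1$ reduction to be the only delicate point.
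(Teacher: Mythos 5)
Your proof is correct and follows essentially the same route as the paper's: a reduction from $3$-terminal Node Multiway Cut to a two-layer temporal graph in which layer $G_1$ enforces the separation of $z_1$ from $\{z_2,z_3\}$ and layer $G_2$ the separation of $z_2$ from $z_3$, together with the observation that $t_0\ge 2$ with $\tau=2$ collapses to the ordinary $(s,z)$-Temporal Separator problem, which is $\mathcal{NP}$-hard by Zschoche et al. The only real difference is that the paper deletes the three terminals and rewires their neighbourhoods directly onto $s$ and $z$ (which are unremovable by definition of a separator), so the terminal-protection issue never arises; your copy blow-up and exchange argument are correct but unnecessary under that cleaner construction.
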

\begin{proof}
For a given graph $H$ and three vertices $z_1$, $z_2$,and $z_3$ we construct a temporal graph $G = (V,E,2)$. Let $V = (V(H) \backslash \{z_1,z_2,z_3\}) \cup \{s,z\}$ and each edge $(u,v)$ in $H$, not incident on $z_1,z_2$ or $z_3$, add two edges $(u,v,1)$ and $(u,v,2)$ to $E$. For each $u$ which is a neighbour of $z_1$ add an edge $(s,u,1)$, and for each $v$ which is a neighbour of $z_2$ or $z_3$ add an edge $(v,z,1)$ to $E$. Finally add $(s,u,2)$ for each neighbour $u$ of $z_2$, as well as $(v,z,2)$ for each neighbour $v$ of $z_3$ to the set of edges. We claim that $S \subseteq V \backslash \{s,z\}$ is a $(s,z,1)$-temporal separator if and only if $S$ is a multiway cut for $H$. \\
$\leftarrow$ Suppose that $S$ is a multiway cut in the graph $H$ and $S$ is not a $(s,z,1)$-temporal separator on the temporal graph $G$. So, there is a $(s,z,1)$-temporal path $P$ with $V(P) \subseteq V \backslash S$. Based on the definition of a $(s,z,1)$-temporal path, either all the edges of the path belong to the layer $G_1$ or all of them belong to the layer $G_2$. Let's consider each case separately.
\begin{itemize}
\item \textbf{Case 1}. All edges of $P$ belong to the layer $G_1$. Suppose that the path $P$ starts with an edge $(s,u,1)$, and ends with an edge $(v,z,1)$. Based on the construction of graph $G$, it is clear that $u$ is a neighbour of $z_1$ and $v$ is a neighbour of $z_2$ or $z_3$. Since all the edges in $G$ that are not incident on $s$ or $z$ also appear in the graph $H$, all the edges except the starting and ending edges in $P$ appear in $H$. Construct a new path $P'$ by replacing  $s$ with $z_1$ and  $z$ with $z_2$ or $z_3$ that is adjacent to $v$. There is no vertex $x \in P$ such that $x \in S$, so all the vertices of $P'$ do not appear in $S$. Then $V({P'}) \subseteq V(H) \backslash S$ and this contradicts the assumption of $S$ being a multiway cut.
\item \textbf{Case 2}. All edges of $P$ belong to the layer $G_2$. Suppose that  $P$ starts with on edge $(s,u,2)$ and ends with on edge $(v,z,2)$, then $u$ is neighbour of  $z_2$ and $v$ is neighbour of  $z_3$. Construct a path $P'$ by replacing  $s$ with $z_2$ and  $z$ with $z_3$. So  $P'$ is a valid path in the graph $H$ from $z_2$ to $z_3$, contradicting the definition of $S$.
\end{itemize}
$\rightarrow$ Suppose that $S$ is a $(s,z,1)$-temporal separator in $G$ and $S$ is not a multiway cut in the graph $H$. So there is a path $P$ between two of the vertices $z_1,z_2,z_3$ in $H$ where  $V(P) \subseteq V(H) \backslash S$. By replacing source vertex $z_1$ or $z_2$ with $s$ and terminal vertex $z_2$ or $z_3$ with $z$ we construct a path $P'$ in which $V(P) \subseteq V \backslash S$. Now consider three cases for path source and terminal of $P$. Since all the edges in $P$ except the first and the last one are not incident on $s$ or $z$, they must appear in both layers $G_1$ and $G_2$. We consider all cases for the start and end vertices of $P$ and derive a contradiction in each case (with the fact that $S$ is a $(s,z,1)$-temporal separator:
\begin{itemize}
\item \textbf{Case 1}. If $P$ is between $z_1$ and $z_2$ then $P'$ lies entirely in layer $G_1$.
\item \textbf{Case 2}. If $P$ is between $z_1$ and $z_3$ then $P'$ lies entirely in layer $G_1$.
\item \textbf{Case 3}. If $P$ is between $z_2$ and $z_3$ then  $P'$ lies entirely in layer $G_2$.
\end{itemize}
\end{proof}






Since Strict $(s,z)$-Temporal Separator is $\mathcal{NP}$-hard on a temporal graph with $\tau \ge 5$ \cite{zschoche2020complexity}, 
it is clear that Strict $(s,z,t)$-Temporal Separator is $\mathcal{NP}$-hard even when restricted to inputs with $t \ge 5$ and $\tau \ge 5$. However,  by a small change to the reduction presented by Zschoche et al. \cite{zschoche2020complexity}, which is inspired by \cite{wu2016efficient}, we can show that Strict $(s,z,t)$-Temporal Separator remains $\mathcal{NP}$-hard even when restricted to inputs with $t = 3$ and $\tau = 4$.

\begin{theorem}
\label{thm:strict-np-hard}
Finding a strict $(s,z,3)$-temporal separator on  a temporal graph $G = (V,E,\tau)$ is $\mathcal{NP}$-hard when restricted to inputs with $\tau = 4$.
\end{theorem}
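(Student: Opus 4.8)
The plan is to adapt the reduction of Zschoche et al.~\cite{zschoche2020complexity} for the strict $(s,z)$-Temporal Separator problem (which achieves $\tau = 5$) and to exploit the deadline $t=3$ to compress it into only $\tau = 4$ layers; following their route I would reduce from Vertex Cover. The starting observation is that for a \emph{strict} path $P = [(u_1,v_1,t_1),\ldots,(u_k,v_k,t_k)]$ we have $t_1 < t_2 < \cdots < t_k$, so $t_k \ge t_1 + (k-1)$ and hence $\ttime(P) = t_k - t_1 + 1 \ge k$. Consequently a strict $(s,z,3)$-temporal path has at most $3$ edges, and since the labels lie in $\{1,2,3,4\}$, every such path with exactly three edges uses one of the two time windows $(1,2,3)$ or $(2,3,4)$. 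This is precisely the feature that makes one of Zschoche's layers redundant: the long padding paths that their fifth layer was needed to forbid are now automatically eliminated by the deadline.

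Given a graph $H = (V_H, E_H)$ and an integer $k$, I would build a temporal graph $G = (V,E,4)$ on the vertex set $\{s,z\} \cup \{x_v : v \in V_H\}$, so that selecting $v$ into the cover corresponds to deleting its selector $x_v$. Each edge $\{u,v\} \in E_H$ is turned into a single strict three-edge path $s \to x_u \to x_v \to z$ (or its reverse) whose three labels form one of the windows above; deleting either $x_u$ or $x_v$ destroys it, matching exactly the requirement that a vertex cover hit every edge. The labels and the orientation of each edge must be chosen globally so that (i) each edge-path fits inside a single window and has $\ttime = 3$, and (ii) no selector $x_w$ is simultaneously joined to $s$ at some time $a$ and to $z$ at some time $b$ with $b \le a+2$. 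Condition (ii) is what prevents a direct ``shortcut'' $s \to x_w \to z$ of travelling time at most $3$: among the candidate pairs $(a,b)$ with $a \in \{1,2\}$ and $b \in \{3,4\}$, only $a=1, b=4$ is safe (there $\ttime = 4 > 3$), so a selector acting as both a path-entry and a path-exit may only do so through this unique safe pairing. I would then claim that $S \subseteq \{x_v : v \in V_H\}$ is a strict $(s,z,3)$-temporal separator of $G$ if and only if $\{v : x_v \in S\}$ is a vertex cover of $H$.

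The correctness proof splits as usual. For the easy direction, if $C$ is a vertex cover then $\{x_v : v \in C\}$ meets every edge-path, and since by construction the edge-paths are the \emph{only} strict $(s,z,3)$-temporal paths, this set is a separator of size $|C|$. For the reverse direction I would first establish the structural claim that the strict $(s,z,3)$-temporal paths of $G$ are exactly the edge-paths; this is where the bulk of the work lies, since I must verify that no concatenation of pieces coming from different edge-gadgets, and no shortcut through a doubly-rolled selector, has travelling time at most $3$ (the deadline excludes everything with four or more edges, and condition (ii) excludes the two-edge shortcuts). Given the claim, any strict $(s,z,3)$-temporal separator must hit every edge-path, and an exchange argument replacing any deleted auxiliary vertex by an incident selector (trivial if the construction uses selectors only) yields a separator contained in $\{x_v\}$ and hence a vertex cover of the same size. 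The main obstacle is precisely establishing this structural claim within only four layers: the orientation-and-window assignment must be shown to exist and to behave correctly for \emph{every} input graph $H$, so that all intended three-edge paths meet the deadline while both the long paths and the single-selector shortcuts are simultaneously ruled out. Once this is in place, polynomial-time computability of the reduction together with the claimed equivalence gives $\mathcal{NP}$-hardness for $t=3$ and $\tau = 4$.
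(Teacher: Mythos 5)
Your high-level route is the same as the paper's: reduce from Vertex Cover and use the observation that a strict path with $k$ edges has $\ttime \ge k$, so the deadline $t=3$ kills everything with four or more edges and lets $\tau=4$ suffice. The gap is in the concrete gadget, and it is not merely an unfinished detail: the orientation-and-window assignment you defer to the end does not exist for general $H$. With one selector $x_v$ per vertex, a vertex $w$ that serves as the entry of one edge-path and the exit of another is forced (by your own condition (ii), correctly derived) to have all its $s$-edges at time $1$ and all its $z$-edges at time $4$; but an edge-path entering at time $1$ must exit at time $3$, and one exiting at time $4$ must enter at time $2$, so no edge of $H$ can join two such ``both-sided'' vertices. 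Already for $H=K_4$ this is fatal: in any orientation at most one vertex can be the tail of all three of its edges and at most one the head of all three, so at least two adjacent vertices are both-sided. Restricting Vertex Cover to graphs that do admit a valid assignment is not a safe fallback either (for instance, bipartite graphs admit one trivially, but Vertex Cover is polynomial there), so the structural claim at the heart of your reverse direction cannot be established as stated.

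The paper escapes exactly this trap by splitting each $v\in V(H)$ into three vertices $s_v$, $v$, $z_v$: the entries $s_v$ touch only $s$ (at time $2$) and the exits $z_v$ touch only $z$ (at times $3,4$), so no vertex is ever both an entry and an exit and no shortcut can arise; the edge $\{u,v\}$ is encoded by $(s_u,z_v,3)$ and $(s_v,z_u,3)$. The price is that the correspondence is no longer ``separator of size $k$ iff cover of size $k$'': two forcing paths per vertex ($s\to s_v\to v\to z$ in window $(2,3,4)$ and $s\to v\to z_v\to z$ in window $(1,2,3)$) compel every separator to take $v$ or both of $s_v,z_v$, giving the calibrated statement ``cover of size $k$ iff separator of size $n+k$.'' If you want to salvage your single-selector idea, you would need either a proof that Vertex Cover remains $\mathcal{NP}$-hard on the graphs admitting your orientation (unlikely to be clean) or a vertex-splitting device of this kind, at which point you have essentially rederived the paper's construction.
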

\begin{proof}
We present a reduction from the vertex cover problem to an instance of Strict $(s,z,3)$-Temporal Separator, which has four layers. Given a graph $H$, we construct a temporal graph $G = (V,E,4)$ as an instance of input for the Strict $(s,z,3)$-Temporal Separator problem.
Let $V = \{s_v,v,z_v | v \in V(H)\} \cup \{s,z\}$ and define $E$ as follows:
\begin{gather*}
    E := \{(s,s_v,2), (s_v,v,3), (v,z,4), (s,v,1), (v,z_v,2), (z_v,z,3), (z_v,z,4) | v \in V(H)\} \cup \\
    \{(s_u,z_v,3), (s_v,z_u,3) | (u,v) \in E(H)\}
\end{gather*}

\begin{figure}
    \centering
    \includegraphics[scale=0.7,angle=0,height=6cm]{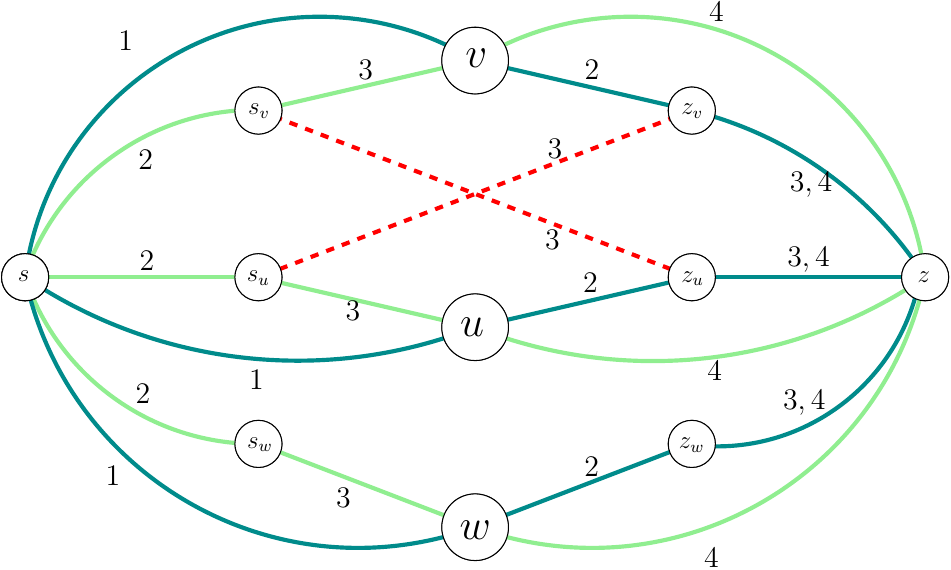}
    \caption{An instance of the Strict $(s,z,3)$-Temporal Separator problem with four layers that corresponds to a vertex cover problem instance in the proof of Theorem~\ref{thm:strict-np-hard}}
    \label{fig:strict-separators}
\end{figure}

Figure \ref{fig:strict-separators} shows the structure of the temporal graph $G$.
Let $n = |V(H)|$; we claim that there is a vertex cover  in $H$ of size $k$, if and only if there exists a strict $(s,z,3)$-temporal separator in $G$ of size $n+k$.

$\rightarrow$ Let $C \in V(H)$ be a vertex cover of size $k$ in $H$, and define $S = \{v | v\in V(H) \backslash C \} \cup \{s_v, z_v | v \in V(H)\}$. Assume that there is a strict $(s,z,3)$-temporal path $P$ such that all its vertices belong to $V \backslash S$. Since for every $v \in V(H)$ either $v \in S$ or $\{s_v,z_v\} \subseteq S$, temporal path $P$ is of the following form:
\begin{gather*}
    P = (s,s_u,2), (s_u,z_v,3), (z_v,z,4).
\end{gather*}
This implies the existence of edge $(s_u, z_v,3)$ in $G$, that results in $(u,v) \in E(H)$. Also existence of $s_u$ and $z_v$ in $P$ implies that $\{u,v\} \subseteq V(H)\backslash C$ which contradicts the fact that $C$ is a vertex cover. So, there is no $(s,z,3)$-temporal path in induced temporal graph $G$ by $V \backslash S$. The cardinality of set $S$ which is a strict $(s,z,3)$-temporal separator for temporal graph $G$ is equal to $(n-k) + 2k$.

$\leftarrow$ Let $S \in V$ be a strict $(s,z,3)$-temporal separator in which $|S| = n + k$. 
For any vertex $v \in V(H)$ we claim that either $v \in S$ or $\{s_v,z_v\} \subseteq S$, otherwise one of the two strict $(s,z,3)$-temporal paths $P_1$ and $P_2$ which are shown in equation \ref{eq:p1} and \ref{eq:p2}, respectively, will not be removed from the graph $G$ by removing  $S$.
\begin{gather}
    \label{eq:p1}P_1 = (s,s_v,2), (s_v,v,3), (v,z,4), \\
    \label{eq:p2}P_2 = (s,v,1), (v,z_v,2), (z_v,z,3).
\end{gather}
Now we construct a set $C \in V(H)$ as follows. For each $v \in V(H)$:
\begin{itemize}
    \item If more than one of the three vertices $s_v$, $v$, and $z_v$ belong to $S$, then add $v$ to $C$.
    \item If only one of the three vertices $s_v$, $v$, and $z_v$  belongs to $S$, then do not add $v$ to $C$.
\end{itemize}

First, based on the fact that at least one of the three vertices $s_v$, $v$, and $z_v$ belongs to $S$, it is clear that $|C| \leq k$. Second, if there is an edge $(u,v) \in E(H)$ such that $\{u,v\} \subseteq V(H) \backslash C$, following the previous claims, it results in both path $P_3$ and $P_4$ (which are shown in the equations \ref{eq:p3} and \ref{eq:p4} respectively) being present in a temporal subgraph induced by $V\backslash S$. Therefore $C$ is a vertex cover with cardinality at most $k$.
\begin{gather}
    \label{eq:p3}P_3 = (s,s_v,2), (s_v,z_u,3), (z_u,z,4), \\
    \label{eq:p4}P_4 = (s,s_u,2), (s_u,z_v,3), (z_v,z,4).
\end{gather}
\end{proof}

Since every temporal path from $s$ to $z$ contains more than two edges, then $\emptyset$ is a strict $(s,z,1)$-temporal separator. Since every strict $(s,z,2)$-temporal path is of the form $(s,v,t), (v,z,t+1)$, 
the Strict $(s,z,2)$-Temporal Separator problem could be solved in polynomial time easily. The Strict $(s,z,t)$-Temporal Separator problem on a graph $G = (V,E,\tau)$ with $\tau = t$ is the same as the Strict $(s,z)$-Temporal Separator. Therefore, in case $\tau  = t = 3$ this problem is equivalent to the Strict $(s,z)$-Temporal Separator problem with $\tau = 3$. Zschoche et al. \cite{zschoche2020complexity} present a polynomial time algorithm for finding a minimum strict $(s,z)$-temporal separator on a temporal graph $G = (V,E,\tau)$ when $\tau < 5$. So, this case could be solved in polynomial time. Although we know that finding a strict $(s,z,t)$-temporal separator on a temporal graph $G = (V,E,3)$ is polynomial-time solvable with the algorithm which is presented in \cite{zschoche2020complexity}, we describe another simple algorithm to solve this problem. 

In the first step of the algorithm, we check if there is an edge between $s$ and $t$. If so, it is clear that there are no separator sets because the direct path using this edge from $s$ to $z$ will remain with the removal of any node from the graph.

Next, for every temporal path from $s$ to $z$ of length two, such as $(s,x, t_1), (x,z$ $,t_2)$ with $t_2 = t_1+1$, it is clear that we have to remove $x$ if we want to remove this path from the graph. So, it is clear that $x \in S$.

In the last step, we know that the length of every temporal path in the graph is three. So, every path from $s$ to $z$ should be of the following form:
    \[(s,x, 1), (x,y, 2), (y,z, 3).\]
Now, put every node $x$ with existing edge $(s,x)$ into the set $X$ with time label 1. Also, put every node $y$ that is a neighbor of $z$ into the set $Y$ with time label 3. Now, it is clear that $X \cap Y = \emptyset$, for otherwise there exists a node $u$ with two existing edges $e_1 = (s,u, 1)$ and $e_2 = (u,z, 3)$, while this node should be removed in the previous step. Therefore, every strict temporal path from $s$ to $z$ should have a corresponding edge $(x,y, 2)$ where $x\in X$ and $y\in Y$. So, we should remove either $x$ or $y$ for every edge $(x,y, 2)$, where $x \in X$ and $y \in Y$. In order to do this we could use any known polynomial time algorithm for the Vertex Cover problem in bipartite graphs.

In the rest of this section we show $\Omega(\log n + \log(\tau))$-inapproximability (assuming $\mathcal{NP}\subset\mbox{\sc Dtime}(n^{\log\log n})$) for the $(s,z,t)$-Temporal Separator problem. This is proved by a strict reduction from the Set Cover problem. Recall that in the Set Cover problem, one is given a collection $\mathcal{S}$ of subsets of a universe $U$ that jointly cover the universe. The goal is to find a minimum size sub-collection of $\mathcal{S}$ that covers $U$.

\begin{theorem}
\label{thm:set_cover_red}
    For every $t > 0$ there is a strict polynomial time reduction from the Set Cover problem to the $(s,z,t)-$Temporal Separator problem.
\end{theorem}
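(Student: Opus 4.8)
The plan is to build, from a Set Cover instance $(U,\mathcal{S})$ with $U=\{e_1,\dots,e_n\}$ and $\mathcal{S}=\{S_1,\dots,S_m\}$, a temporal graph whose minimum $(s,z,t)$-temporal separators are in a size-preserving bijection with minimum set covers. I would take the vertex set to be $\{s,z\}\cup\{v_1,\dots,v_m\}$, one vertex $v_j$ per set $S_j$. Since no separator may contain $s$ or $z$, every candidate separator is a subset of $\{v_1,\dots,v_m\}$, i.e. a choice of a sub-collection of $\mathcal{S}$. For each element $e_i$, list the sets containing it as $S_{j_1},\dots,S_{j_k}$ and realize the path $P_i : s - v_{j_1} - v_{j_2} - \cdots - v_{j_k} - z$ entirely inside a single layer $\ell_i$, i.e. all of its edges carry the same time label $\ell_i$. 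Then $\ttime(P_i)=1\le t$, so each $P_i$ is an $(s,z,t)$-temporal path that must be hit, and hitting $P_i$ means selecting some $v_j$ with $e_i\in S_j$ --- exactly covering $e_i$. The classical hitting-set/set-cover duality then gives that a sub-collection is a set cover if and only if the corresponding vertex set meets every $P_i$.

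The crucial point is to make the $P_i$ the \emph{only} $(s,z,t)$-temporal paths, so that hitting them is both necessary and sufficient for being a separator. I would place the element layers at $\ell_i := (i-1)t+1$, so that distinct element layers differ by at least $t$, and set $\tau := (n-1)t+1$. Within a single layer $G_{\ell_i}$ the only edges are those of $P_i$, which form a chordless path; hence the unique $s$--$z$ path using a single time label is $P_i$ itself. Any temporal path that uses two distinct time labels $\ell_a<\ell_b$ has all of its edge times drawn from $\{\ell_1,\dots,\ell_n\}$, so its first and last edge times differ by at least $t$, giving $\ttime\ge t+1>t$; such a path is therefore not an $(s,z,t)$-temporal path. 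Consequently the family of $(s,z,t)$-temporal paths is exactly $\{P_1,\dots,P_n\}$.

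Putting the pieces together I would prove both directions with matching cardinalities. If $C\subseteq\mathcal{S}$ is a set cover, then $\{v_j : S_j\in C\}$ meets every $P_i$ (because $C$ covers $e_i$) and hence, by the previous paragraph, is an $(s,z,t)$-temporal separator of the same size; conversely any separator may be taken (without loss of generality) to be a set of $v_j$'s meeting every $P_i$, and the corresponding sub-collection covers every $e_i$. Thus the minimum separator size equals the minimum set cover size, feasible solutions transfer in both directions preserving size, and the reduction is strict (approximation-preserving). Since for fixed $t$ we have $|V|=m+2$, $\tau=(n-1)t+1$, and everything is computable in time polynomial in $n+m$, this is a strict polynomial-time reduction; the $\Omega(\ln n + \ln\tau)$ hardness then follows by transferring the known inapproximability of Set Cover together with $\tau=\Theta(nt)$.

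The main obstacle, which the spacing of the layers is designed to overcome, is ruling out ``cross-layer'' temporal paths that stitch together a prefix of one $P_i$ and a suffix of another $P_{i'}$ through a shared set-vertex: if such a path fell under the deadline it could survive a valid set cover and break the equivalence. The whole argument therefore hinges on the travelling-time bound forcing every multi-layer path strictly past the deadline $t$, which is exactly what the gap $\ell_{i+1}-\ell_i=t$ guarantees.
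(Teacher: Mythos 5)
Your construction is essentially identical to the paper's: one vertex per set, one single-layer $s$--$z$ path per element listing the sets containing it, with consecutive element layers spaced $t$ apart (you use $(i-1)t+1$ where the paper uses $i\cdot t$) so that any cross-layer path has travelling time at least $t+1$ and is thus excluded. The correctness argument, the size-preserving correspondence giving strictness, and the resulting inapproximability transfer all match the paper's proof.
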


\begin{proof}
    Let $(U, \mathcal{S})$ be an instance of the Set Cover problem, where $U = \{1,2,\dots n\}$ is the universe and $\mathcal{S} = \{S_1,S_2,\dots, S_m\}$ is a family of sets the union of which covers $U$. For each $i \in U$ define the family $\mathcal{F}_i$ as $\mathcal{F}_i = \{ S \in \mathcal{S} \mid i \in S\},$
    i.e., $\mathcal{F}_i$ consists of all sets from $\mathcal{S}$ that contain element $i$. Let $k_i = |\mathcal{F}_i|$ and order the elements of each $\mathcal{F}_i$ in the order of increasing indices, i.e.,
    \begin{equation}\label{eq:fi}
    \mathcal{F}_i = \{ S_{i_1}, \ldots, S_{i_{k_i}}\}.
    \end{equation}
    
    Our reduction outputs a temporal graph $f(U,\mathcal{S}) = (V\cup\{s,z\}, E)$ where:
    \begin{itemize}
        \item the vertex set is $V\cup \{s,z\}=\{v_i | i \in [m]\}\cup \{s,z\}$;
        \item the edge set is $E=E_1 \cup E_2 \cup \cdots \cup E_n$, where
        \[E_i = \{(s, v_{i_1}, i \cdot t), (v_{i_1}, v_{i_2}, i \cdot t), \dots, (v_{i_{k_i-1}}, v_{i_{k_i}}, i \cdot t), (v_{i_{k_i}}, z, i \cdot t)\}.\]
    \end{itemize}

    The main idea behind the proof is to map every element of $U$ to a path from $s$ to $z$ in $f(U,\mathcal{S})$ bijectively, so by covering an element, we remove the corresponding path in $f(U,\mathcal{S})$ as well as by removing a path we cover the corresponding element.
    
    We claim that $V'=\{v_{j_1}, \ldots, v_{j_\ell} \} \subseteq V$ is a $(s,z,t)-$temporal separator for $f(U, \mathcal{S})$ if and only if $\mathcal{S}'=\{ S_{j_1}, \ldots, S_{j_\ell}\} \subseteq \mathcal{S}$ is a set cover for $(U, \mathcal{S})$.
    
    \begin{figure}
        \centering
        \includegraphics[scale=0.7]{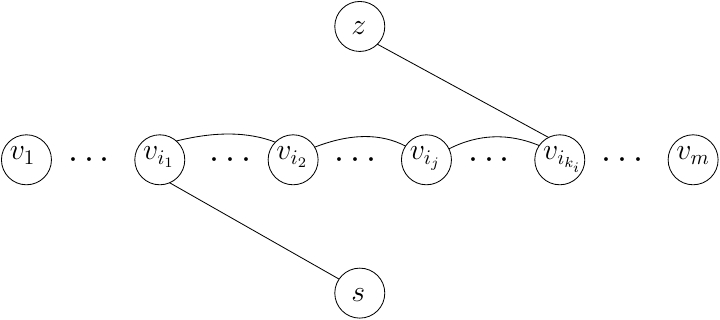}
        \caption{Layer $G_{i\cdot t}$ of the temporal graph  used in the proof of Theorem~\ref{thm:set_cover_red}.}
        \label{fig:E_I}
    \end{figure}
    
    Figure \ref{fig:E_I} represents the edges in the layer $G_{it}$, which contain all the edges in $E_i$. It illustrates that element $i$ in the universe $U$ corresponds to a path $E_{i}$, as well as the element $i$ is covered by the set $S_{i_j} \in \mathcal{S'}$ if and only if a temporal path which is shown in Figure \ref{fig:E_I} is removed from the temporal graph by removing the vertex $v_{i_j} \in V'$.
    
    $\rightarrow$ Suppose for contradiction that $\mathcal{S}'$ does not cover $U$. Pick an arbitrary item $i \in U$ that is not covered and consider the following path 
        $P = [ (s, v_{i_1}, i\cdot t), (v_{i_1}, v_{i_2}, i\cdot t), \dots, (v_{i_{k_i-1}}, v_{i_{k_i}}, i\cdot t), (v_{i_{k_i}}, z, i\cdot t)],$
    where the indices are according to \eqref{eq:fi}. Since $i$ is not covered, $\mathcal{F}_i \cap \mathcal{S}' = \emptyset$, so $P$ is present in $f(U, \mathcal{S}) \setminus V'$ violating the assumption that $V'$ is a $(s,z,t)-$temporal separator (note that $\ttime(P) = 0$).
    
    $\leftarrow$ Now, suppose for contradiction that $V'$ is not a $(s,z,t)$-temporal separator. Thus, there is path $P$ from $s$ to $z$ with $\ttime(P) < t$. From the definition of $f(U, \mathcal{S})$ it is clear that $P$ should be using edges only from $E_j$ for some $j \in [n]$. Note that there is a unique $(s,z)$-temporal path that can be constructed from $E_j$, namely, 
            $P = [(s, v_{j_1}, j \cdot t), (v_{j_1}, v_{j_2}, j \cdot t), \dots, (v_{j_{k_j-1}}, v_{j_{k_j}}, j \cdot t), (v_{j_{k_j}}, z, j \cdot t)].$
        This implies that element $j$ is not covered by $\mathcal{S}'$, since otherwise, one of the $v_{j_i}$ would be in $V'$.
    
    Following the previous claim, every solution in $(s,z,t)$-Temporal Separator has a corresponding solution in Set Cover, and vice versa. Therefore, an optimal solution in $(s,z,t)$-Temporal Separator, has a corresponding optimal solution in Set Cover. As a result $\frac{|V'|}{|V_{opt}|} = \frac{|S'|}{|S_{opt}|}$.
    This implies that the reduction is strict.
\end{proof}

Due to the inapproximability of Set Cover (see \cite{feige1998threshold}), we have the following:

\begin{corollary}
The $(s,z,t)$-Temporal Separator problem is not approximable to within $(1 - \epsilon)(\log n + \log(\tau))$ in polynomial time for any  $\varepsilon >0$, unless $\mathcal{NP}\subset\mbox{\sc Dtime}(n^{\log\log n})$.
\end{corollary}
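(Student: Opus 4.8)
The plan is to combine the strict reduction of Theorem~\ref{thm:set_cover_red} with Feige's inapproximability of Set Cover~\cite{feige1998threshold}: no polynomial-time algorithm approximates Set Cover within a factor $(1-\epsilon)\ln|U|$ for any $\epsilon>0$ unless $\mathcal{NP}\subset\mbox{\sc Dtime}(n^{\log\log n})$. The crucial ingredient is the strictness established at the end of the proof of Theorem~\ref{thm:set_cover_red}, namely $|V'|/|V_{opt}| = |\mathcal{S}'|/|\mathcal{S}_{opt}|$ under the bijection between $(s,z,t)$-temporal separators of $f(U,\mathcal{S})$ and set covers of $(U,\mathcal{S})$.

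First I would prove the contrapositive. Suppose that for some fixed $t$ there were a polynomial-time $\rho$-approximation algorithm $A$ for the $(s,z,t)$-Temporal Separator problem. Given a Set Cover instance $(U,\mathcal{S})$, I would run $A$ on $f(U,\mathcal{S})$ and map the returned temporal separator back to the corresponding set cover via the correspondence of Theorem~\ref{thm:set_cover_red}. Since $f$ is computable in polynomial time and the correspondence preserves values exactly, this composition is a polynomial-time $\rho$-approximation for Set Cover. By Feige's theorem, taking $\rho = (1-\epsilon)\ln|U|$ would force $\mathcal{NP}\subset\mbox{\sc Dtime}(n^{\log\log n})$, so no such $A$ can exist under the stated hypothesis.

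It then remains to re-express this threshold in the native temporal parameters $n$ (number of vertices) and $\tau$. In $f(U,\mathcal{S})$ we have $n = |\mathcal{S}|+2$ and the largest time label equals $|U|\cdot t$, so $\tau = |U|\cdot t$ and $\ln|U| = \ln\tau - \ln t$. Assuming, as one may for the hard Set Cover instances, that $|\mathcal{S}| = \mathrm{poly}(|U|)$, the quantities $|U|$, $|\mathcal{S}|$, and $\tau$ are polynomially related, whence $\ln n + \ln\tau = \Theta(\ln|U|)$, and Feige's factor $(1-\epsilon)\ln|U|$ yields the claimed $(1-\epsilon)(\ln n + \ln\tau)$ up to the absorbed constants. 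The main obstacle is exactly this last bookkeeping step: a strict reduction transmits the numerical factor verbatim, so care is needed to check that $(1-\epsilon)\ln|U|$ controls the expression $(1-\epsilon)(\ln n + \ln\tau)$ written in the temporal parameters. This is where the polynomial relationship between $|\mathcal{S}|$ and $|U|$, together with the fixedness of $t$ (so that $\ln t$ is negligible against $\ln\tau$), is genuinely used; without the polynomial relation one recovers only the single-parameter bound $\Omega(\ln\tau)$.
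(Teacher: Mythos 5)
Your proposal is correct and follows exactly the paper's route: the paper derives this corollary in a single line by composing the strict reduction of Theorem~\ref{thm:set_cover_red} with Feige's $(1-\epsilon)\ln|U|$ inapproximability of Set Cover, which is precisely your contrapositive argument. Your extra bookkeeping relating $|U|$, $|\mathcal{S}|$, $n$, and $\tau$ is more careful than the paper's (which omits it entirely), and your closing caveat --- that the literal constant $(1-\epsilon)$ in front of $\log n + \log\tau$ only survives as an $\Omega(\log n + \log\tau)$ bound unless $|\mathcal{S}|$ and $t$ are suitably controlled --- is a legitimate point that the paper glosses over.
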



\subsection{Approximation Algorithms}
\label{section:approximation}
In this section, we present an efficient $\tau^2$-approximation for the $(s,z,t)$-Temporal Separator problem. We begin by establishing a $\tau$-approximation for the $(s,z)$-Temporal Separator problem. The main tool used in this section is the \emph{flattening}\footnote{The concept of flattening is not new, and it is similar to the static expansion of a temporal graph -- see, for example, \cite{mertzios}.} of a temporal graph $G=(V,E,\tau)$ with respect to vertices $s$ and $z$, denoted by $F(G, s, z) = (V',E')$. To ease the notation we omit the specification of $s$ and $z$ and denote the flattening of $G$ by $F(G)$. The flattening $F(G)$ is a static directed graph defined as follows: the vertex set $V'$ is the union of $\tau$ disjoint sets $V_1, V_2, \dots, V_{\tau}$ and $\{s,z\}$, where each $V_i$ is a disjoint copy of $V-\{s,z\}$. Denoting the vertices of $V$ by $v_1, v_2, \ldots, v_n$, we have $\forall i \in [\tau]\;\; V_i = \{v_{j,i} | v_j \in V - \{s,z\}\}$. The edge set $E'$ of the flattening $F(G)$ is defined as follows:
\begin{itemize}
    \item For each $(v_i, v_j, t') \in E$ with $v_i, v_j \not\in \{s,z\}$ we add edges $(v_{i,t'}, v_{j, t'})$ and $( v_{j, t'}, v_{i,t'})$ to $E'$.
    \item For each $v_i \in V$ and each time $t' \in [\tau - 1]$ we add an edge $(v_{i,t'}, v_{i, t'+1})$ to $E'$.
    \item For each $(s, v_i, t') \in E$ we add an edge $(s, v_{i,t'})$ to $E'$.
    \item For each $(z, v_i, t')$ we add an edge $(v_{i,t'}, z)$ to  $E'$.
\end{itemize}
Clearly, $F(G)$ is defined to express temporal $(s,z)$-paths in $G$ in terms of $(s,z)$-paths in $F(G)$. More specifically, if we have a temporal $(s,z)$ path $P$ in $G$ then there is an analogous static $(s,z)$ path $P'$ in $F(G)$. If $P$ begins with an edge $(s, v_i, t_1)$ then $P'$ begins with an edge $(s, v_{i,t_1})$. After that if the next edge in $P$ is  $(v_i, v_j, t_2)$, we can simulate it in $F(G)$ by introducing a sequence of edges $(v_{i, t_1}, v_{i, t_1+1}), (v_{i, t_1+1}, v_{i, t_1+2}), \ldots, (v_{i, t_2-1}, v_{i, t_2})$ followed by an edge $(v_{i, t_2}, v_{j, t_2})$, and so on until the vertex $z$ is reached. This correspondence works in reverse as well. If $P'$ is a static $(s,z)$ path in $F(G)$ then we can find an equivalent temporal $(s,z)$ path in $G$ as follows. If the first edge in $P'$ is $(s, v_{i, t_1})$ then this corresponds to the first edge of $P$ being $(s, v_i, t_1)$. For the following edges of $P'$, if the edge is of the form $(v_{i, t'}, v_{i, t'+1})$ then it is simply ignored for the purpose of constructing $P$ (since it corresponds to the scenario where the agent travelling along the path is simply waiting an extra time unit at node $v_i$), and if the edge is of the form $(v_{i, t'}, v_{j, t'})$ then we add the edge $(v_i, v_j, t')$ to $P$. This continues until $z$ is reached. Thus, there is a temporal $(s,z)$ path $P$ in $G$ if and only if there is a static $(s,z)$ path $P'$ in $F(G)$. Moreover, if $S$ represents the internal nodes of the path $P$ then we can find $P'$ with internal nodes in $\bigcup_{t' \in [\tau]} \{v_{i, t'} : v_i \in S\}$. In the reverse direction, if $P'$ uses internal nodes $S'$ then we can find $P$ with internal nodes in $\{v_i : \exists t' \;\; v_{i,t'} \in S'\}$.

Armed with these observations, we show that the sizes of $(s,z)$-temporal separators in $G$ and $(s,z)$-separators (non-temporal) in $F(G)$ are related as follows.

\begin{theorem}
\label{theorem:appx1}
\begin{enumerate}
\item If $S$ is an $(s,z)$-temporal separator in $G$ then there is an $(s,z)$-separator of size at most $\tau|S|$ in $F(G)$.
\item If $S'$ is an $(s,z)$-separator in $F(G)$ then there is an $(s,z)$-temporal separator of size at most $|S'|$ in $G$.
\end{enumerate}    
\end{theorem}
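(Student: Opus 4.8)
The plan is to derive both parts directly from the path correspondence between $G$ and $F(G)$ established immediately before the statement, by exhibiting explicit separators and arguing by contradiction. Throughout I would use the projection $\pi(v_{i,t'}) = v_i$ that sends a time-stamped copy back to its original vertex, together with its ``lift'' $\pi^{-1}(v_i) = \{v_{i,t'} : t' \in [\tau]\}$. The correspondence already guarantees that temporal $(s,z)$-paths in $G$ and static $(s,z)$-paths in $F(G)$ are in bijection in a way that respects internal vertices via $\pi$ and $\pi^{-1}$, so the remaining task is purely bookkeeping about which vertices a surviving path can use.

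For part 1, given an $(s,z)$-temporal separator $S \subseteq V - \{s,z\}$ in $G$, I would take $S' = \bigcup_{v_i \in S} \pi^{-1}(v_i) = \{v_{i,t'} : v_i \in S,\ t' \in [\tau]\}$, which contains exactly $\tau|S|$ vertices, giving the claimed bound. To see that $S'$ separates $s$ from $z$ in $F(G)$, suppose some static $(s,z)$-path $P'$ survived in $F(G) - S'$. The reverse direction of the correspondence produces a temporal $(s,z)$-path $P$ in $G$ whose internal vertices all lie in $\{v_i : \exists t',\ v_{i,t'} \in V(P')\}$. Since no internal vertex of $P'$ belongs to $S'$, none of these projected vertices lies in $S$, so $P$ avoids $S$ entirely, contradicting that $S$ is a temporal separator.

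For part 2, given an $(s,z)$-separator $S' \subseteq V' - \{s,z\}$ in $F(G)$, I would take $S = \pi(S') = \{v_i : \exists t',\ v_{i,t'} \in S'\}$; since $\pi$ is a function, $|S| = |\pi(S')| \le |S'|$. Arguing again by contradiction, if some temporal $(s,z)$-path $P$ survived in $G - S$, the forward direction of the correspondence yields a static $(s,z)$-path $P'$ in $F(G)$ whose internal vertices lie in $\{v_{i,t'} : v_i \in V(P) \cap (V - \{s,z\}),\ t' \in [\tau]\}$. Because every internal vertex $v_i$ of $P$ avoids $S$, and $S'$ projects into $S$, every copy $v_{i,t'}$ avoids $S'$; hence $P'$ survives in $F(G) - S'$, contradicting that $S'$ is a separator.

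The two constructions are natural inverses at the level of vertex sets, and the heavy lifting has already been done by the path correspondence, so I do not anticipate any deep obstacle. I expect the only delicate point to be the careful tracking of \emph{internal} vertices through the correspondence in both directions, ensuring that ``avoiding the lifted separator'' in one graph translates precisely into ``avoiding the original separator'' in the other. Getting the quantifiers right --- ``every copy $v_{i,t'}$ is removed'' in part 1 versus ``some copy witnesses membership'' in the projection of part 2 --- is where an error is most likely to creep in, but no machinery beyond the established correspondence should be required.
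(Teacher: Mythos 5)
Your proposal is correct and follows essentially the same route as the paper: the same lift $S' = \{v_{i,t'} : v_i \in S,\ t'\in[\tau]\}$ for part 1, the same projection $S = \{v_i : \exists t'\ v_{i,t'}\in S'\}$ for part 2, and the same contradiction argument via the path correspondence established before the theorem. The only difference is that you spell out part 2 in full where the paper merely says it is ``similar'' to part 1, and your handling of internal vertices through the correspondence is sound.
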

\begin{proof}

\begin{enumerate}
\item Define $S' = \bigcup_{t' \in [\tau]} \{v_{i,t'} \in V' : v_i \in S\}$. Clearly, $|S'| = \tau |S|$. Suppose for the contradiction that $S'$ is not an $(s,z)$-separator in $F(G)$. Then there is a path from $s$ to $z$ in $F(G)$ that avoids all vertices in $S'$. By the observations made prior to the statement of this theorem, this path corresponds to a temporal path in $G$ that avoids vertices in $S$. Thus, $S$ is not an $(s,z)$-temporal separator in $G$.
\item Define $S = \{v_i \in V : \exists t' \in [\tau] \;\; v_{i,t'} \in S'\}$. Clearly, $|S| \le |S'|$. An argument similar to the one given in the previous part establishes that $S$ is an $(s,z)$-temporal separator in $G$.
\end{enumerate}
\end{proof}

\begin{corollary}
The $(s,z)$-Temporal Separator problem on a temporal graph $G=(V,E, \tau)$ can be approximated within $\tau$ in $O((m+n\tau) n\tau)$ time, where $n = |V|$ and $m= |E|$.
\end{corollary}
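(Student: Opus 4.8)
The plan is to reduce the approximation task to computing an \emph{exact} minimum $(s,z)$-vertex separator in the static directed graph $F(G)$, which is polynomial-time solvable via max-flow/min-cut, and then to invoke Theorem~\ref{theorem:appx1} both to bound the approximation ratio and to transfer the solution back to $G$. Feasibility and the ratio will then follow essentially for free from the already-proved theorem, so the real content of the corollary is the running-time accounting.

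First I would construct the flattening $F(G)$ as defined above; it has $N = O(n\tau)$ vertices and $M = O(m + n\tau)$ edges, and it is built in $O(M)$ time. Next I would compute a minimum $(s,z)$-vertex separator $S'$ in $F(G)$. Since $F(G)$ is a static directed graph, this is the standard vertex-splitting computation: replace each internal vertex $v$ by an arc $v_{\mathrm{in}} \to v_{\mathrm{out}}$ of unit capacity and give all original arcs infinite capacity, then compute a maximum $s$--$z$ flow. By Menger's theorem the value of this flow equals $|S'|$, and the separating vertices themselves are recovered by a single residual-graph reachability search from $s$. Finally I would project $S'$ back to $G$ using part~(2) of Theorem~\ref{theorem:appx1}, obtaining a temporal separator $S$ with $|S| \le |S'|$.

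For correctness of the ratio, let $S^{*}$ be an optimum $(s,z)$-temporal separator in $G$. Part~(1) of Theorem~\ref{theorem:appx1} produces an $(s,z)$-separator of $F(G)$ of size at most $\tau|S^{*}|$, so the \emph{minimum} separator satisfies $|S'| \le \tau|S^{*}|$. Combining this with $|S| \le |S'|$ from the projection step yields $|S| \le \tau|S^{*}|$, i.e., the output is a $\tau$-approximation, while part~(2) guarantees that $S$ is a genuine temporal separator. Thus the cut in $F(G)$ is sandwiched between $|S|$ from below (part~2) and $\tau|S^{*}|$ from above (part~1), which is exactly what pins the ratio at $\tau$.

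For the running time, the only nontrivial cost is the flow computation. The max-flow value equals the minimum vertex-separator size in $F(G)$, which cannot exceed the number of internal vertices, $O(n\tau)$. Using augmenting-path (BFS) max flow, each augmentation runs in $O(M) = O(m + n\tau)$ time and raises the integral flow by at least one unit, so at most $O(n\tau)$ augmentations occur; the construction of $F(G)$, the residual reachability search, and the back-projection are all dominated by this. Hence the total time is $O\big((m + n\tau)\, n\tau\big)$, as claimed. I expect the main (though mild) obstacle to be this bookkeeping rather than any algorithmic subtlety: each individual step is a textbook routine, and the one place to be careful is justifying that the flow value—and therefore the augmentation count—is bounded by $O(n\tau)$ and not by the potentially larger number of edges.
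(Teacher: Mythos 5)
Your proposal is correct and follows essentially the same route as the paper: solve the minimum vertex separator problem exactly on the flattening $F(G)$ via Menger's theorem and augmenting-path max flow, then use the two parts of Theorem~\ref{theorem:appx1} to transfer the solution back and sandwich the ratio at $\tau$. Your write-up is in fact slightly more explicit than the paper's (the vertex-splitting construction and the bound of $O(n\tau)$ on the number of augmentations are only implicit there), but there is no substantive difference in approach.
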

\begin{proof}
We can use any existing efficient algorithm to solve the $(s,z)$ separator problem on $F(G)$ and return its answer, which will give $\tau$-approximation by Theorem~\ref{theorem:appx1}. For example, the stated runtime is achieved by applying Menger's theorem and the Ford-Fulkerson algorithm to compute the maximum number of vertex-disjoint paths in $F(G)$. Then the running time is $O(|E'||V'|)$. Observing that $|E'| \le |E| + |V| \tau$ and $|V'| \le  |V| \tau$, finishes the proof of this corollary.
\end{proof}

Next, we describe how the $(s,z,t)$-Temporal Separator problem can be approximated using a slight extension of the above ideas. First, for a temporal graph $G = (V, E, \tau)$ and two integers $t_1 \le t_2$ we define $E[t_1:t_2] = \{ (u, v, t) \in E : t_1 \le t' \le t_2\}$. We also define $G[t_1 : t_2] = (V, E[t_1 : t_2], t_2)$, which can be thought of as graph $G$ restricted to time interval $[t_1, t_2]$. The idea behind approximating a minimum $(s,z,t)$-temporal separator is to combine $(s,z)$-temporal separators of $F(G[1:t+1]), F(G[2:t+2]), \ldots, F(G[\tau-t : \tau])$.

\begin{theorem}
The $(s,z,t)$-Temporal Separator problem on a temporal graph $G=(V,E, \tau)$ can be approximated within $\tau^2$ in $O((m+n\tau)n\tau^2)$ time, where $n = |V|$ and $m = |E|$.
\end{theorem}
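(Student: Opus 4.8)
The plan is to reduce the $(s,z,t)$-Temporal Separator problem to $O(\tau)$ instances of the $(s,z)$-Temporal Separator problem, solve each within a factor of $\tau$ using the preceding corollary, and return the union of the computed separators. The structural fact that makes this work is that every $(s,z,t)$-temporal path is confined to a window of $t$ consecutive layers: if such a path $P$ starts at time $t_1$ then, since $\ttime(P) = t_k - t_1 + 1 \le t$, all its edge labels lie in $[t_1, t_1+t-1]$, so $P$ is an ordinary $(s,z)$-temporal path of the restricted graph $G[t_1 : t_1 + t - 1]$. Accordingly I would define the windows $W_i = G[i : i+t-1]$ for $i = 1, \ldots, \tau - t + 1$ (a single window $G[1:\tau]$ when $t \ge \tau$, recovering the plain $(s,z)$-Temporal Separator problem), compute for each a $\tau$-approximate $(s,z)$-temporal separator $S_i$ by running the flow-based algorithm on the flattening $F(W_i)$ and pulling the cut back to $G$ via Theorem~\ref{theorem:appx1}, and output $S = \bigcup_i S_i$.

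To establish correctness I would take an arbitrary $(s,z,t)$-temporal path $P$ with start time $t_1$ and show it lies in one of the windows: if $t_1 \le \tau - t + 1$ it lies in $W_{t_1}$, and if $t_1 > \tau - t + 1$ then $[t_1, t_k] \subseteq [\tau - t + 1, \tau]$, so $P$ lies in the last window. Because $S_i$ is an $(s,z)$-temporal separator of $W_i$ it meets $P$, hence so does $S$; thus $S$ is a valid $(s,z,t)$-temporal separator.

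For the size guarantee, let $S^*$ be optimal with $\mathrm{OPT} = |S^*|$. The crucial observation is that every $(s,z)$-temporal path of $W_i$ has travelling time at most $t$, and is therefore a genuine $(s,z,t)$-temporal path of $G$; hence removing $S^*$ destroys all $(s,z)$-temporal paths of $W_i$, so the optimal $(s,z)$-temporal separator of $W_i$ has size at most $\mathrm{OPT}$. The corollary then gives $|S_i| \le \tau\cdot\mathrm{OPT}$, and since there are at most $\tau - t + 1 \le \tau$ windows, $|S| \le \sum_i |S_i| \le \tau^2\,\mathrm{OPT}$. The running time is $O(\tau)$ times the per-window cost $O((m+n\tau)n\tau)$ from the corollary, namely $O((m+n\tau)n\tau^2)$.

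The step to get exactly right — and the main obstacle — is the choice of window length. It must be large enough that every $(s,z,t)$-path fits inside some window (for correctness) yet small enough that every $(s,z)$-temporal path contained in a window is itself an $(s,z,t)$-path (so that the window optima are bounded by $\mathrm{OPT}$). Windows spanning exactly $t$ layers meet both requirements simultaneously; a window even one layer longer would admit $(s,z)$-temporal paths of travelling time $t+1$ that an optimal $(s,z,t)$-separator need not cut, which would break the size bound.
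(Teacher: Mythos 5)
Your proposal is correct and follows essentially the same route as the paper: slide a window of roughly $t$ consecutive layers across $[1,\tau]$, run the flattening-based $\tau$-approximation for the plain $(s,z)$-Temporal Separator problem on each window, and take the union, with the two-sided accounting (every $(s,z,t)$-temporal path lies in some window; every $(s,z)$-temporal path inside a window is itself an $(s,z,t)$-temporal path, so each window optimum is at most $\mathrm{OPT}$) yielding the $\tau^2$ factor. The one point where you diverge is actually a correction: you use $t$-layer windows $G[i:i+t-1]$, whereas the paper's windows $G[i:i+t]$ span $t+1$ layers and can contain $(s,z)$-temporal paths of travelling time $t+1$ that an optimal $(s,z,t)$-temporal separator need not hit, so your tighter window choice is precisely what makes the bound $|S_i|\le\tau\cdot\mathrm{OPT}$ go through.
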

\begin{proof}
The algorithm has essentially been described prior to the statement of the theorem, so the running time is clear. It is left to argue that it produces $\tau^2$-approximation. This can be argued similarly to Theorem~\ref{theorem:appx1}.
\begin{enumerate}
\item Let $S$ be a $(s,z,t)$-temporal separator in $G$. Then for $G[i: i+t]$ we define $S_i$ to consist of all nodes $v_{j, t'}$ with $v_j \in S$. Since $S$ removes all paths from $G$ of travelling time $\le t$ and $G[i:i+t]$ only has paths of travelling time $\le t$, then $S_i$ is a $(s,z)$-separator in $G[i:i+t]$ of size $|S_i| = \tau |S|$. Thus, if there is an $(s,z,t)$-temporal separator of size $|S|$ in $G$ then the combined size of all $(s,z,t)$-temporal separators of $G[1:t+1], G[1:t+2], \ldots, G[\tau-t, \tau]$ is at most $\tau^2 |S|$.
\item Let $S_i$ be a $(s,z)$-temporal separator in $G[i:i+t]$. Define $S = \{ v_j : \exists i \exists t' \;\; v_{j, t'} \in S_i\}$. It is easy to see that $S$ is a $(s,z,t)$ temporal separator in $G$. Paths of travelling time at most $t$ that begin with an edge $(s, v_i, t_1)$ are present in $G[t_1, t_1 + t]$, and so removal of $S_{t_1}$ removes such temporal paths in $G[t_1, t_1+t]$. Since $S_{t_1}$ is ``projected'' onto $V$ and included in $S$, these paths are eliminated from $G$.
\end{enumerate}
\end{proof}


\section{Temporal Separators with Deadlines on Special Families of Graphs}
\label{sec:special-families}

\subsection{Temporal Graphs with Branchwidth at most $2$}
\label{sec:bounded_branchwidth}

The graphs with branchwidth $2$ are graphs in which each biconnected component is a series-parallel graph \cite{robertson1991graph}. In this section, we present an efficient algorithm to solve the $(s,z,t)$-Temporal Separator problem on temporal graphs whose underlying static graphs have branchwidth at most $2$. In fact, our algorithm works for a more general class of problems, which we refer to as ``restricted path $(s,z)$-Temporal Separator.'' The goal in this more general problem is to select a set of vertices $S$ such that the removal of $S$ from the given temporal graph $G$ removes all $(s,z)$ paths in a restricted family of paths. The $(s,z,t)$-Temporal Separator problem is seen as a special case of this, where paths are restricted to have travelling time less than $t$. Restricted family of paths could be any path family implicitly defined by a procedure $ExistsRestrictedPath(G, s, z)$ which takes as input a temporal graph $G$, a pair of nodes $s$ and $z$, and returns true if and only if there exists a restricted temporal path between $s$ and $z$ in $G$. Due to Lemma~\ref{lemma:temporalPath}, we know that such a procedure exists in the case of temporal paths restricted by travelling time, which is suitable for the $(s,z,t)$-Temporal Separator problem.

For the rest of this section, we assume that $G$ is a temporal graph such that $bw(G_\downarrow) \le 2$ unless stated otherwise. Furthermore, we assume that $G_\downarrow$ is connected, otherwise, if $s$ and $z$ belong to different connected components the answer to the problem is trivially $\emptyset$, and if they belong to the same connected component, the problem reduces to analyzing that connected component alone. We introduce some notation and make several observations about branch decomposition before we give full details of our algorithm. Recall from Section~\ref{sec:prelim} that branch decomposition of $G$ of width $2$ can be computed in linear time. Thus, we assume that the algorithm has access to such a decomposition, which we denote by $(T, \beta)$. We use $\rho$ to denote the root of $T$ and we define the function $top: V(G) \rightarrow V(T)$ as follows.  For $v \in V(G)$  we let $top(v)$ be the furthest node  $x \in V(T)$ from the root $r$ which satisfies $E(v) \subseteq \beta(x)$. We also use $x_l$ to denote the left child of $x$ and $x_r$ to denote the right child of $x$. For a node $x \in V(T)$ we define $G^{in}_x$ to be the temporal graph obtained from $G$ by keeping only those edges $(u, v, t)$ with $(u,v) \in \beta(x)$ and removing all vertices of degree $0$. We collect several useful observations about the introduced notions in the following lemma.

\begin{lemma}
\label{lem:helper_bw_2}
\begin{enumerate}
\item If $v \in \partial \beta(x)$ then $v \in \partial \beta(x_\ell)$ or $v \in \partial \beta(x_r)$.
\item If $top(v) = x$ then $v \in \partial \beta(x_\ell)$ and $v \in \partial \beta(x_r)$.
\item If $v \in V(G^{in}_x) \setminus \partial \beta(x)$ then all edges incident on $v$ in $G$ are present in $G^{in}_x$.
\end{enumerate}
\end{lemma}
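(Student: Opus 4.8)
The plan is to derive all three claims from two structural properties of the branch decomposition $(T,\beta)$. First, \emph{monotonicity}: if $y$ is an ancestor of $w$ in $T$ then $\beta(w)\subseteq\beta(y)$. Second, \emph{disjointness}: $\beta(x_\ell)\cap\beta(x_r)=\emptyset$ for every inner node $x$. Both follow by an easy induction from the rule $\beta(x)=\beta(x_\ell)\cup\beta(x_r)$ together with the fact that each edge labels exactly one leaf, so that $\beta(x)$ is precisely the set of edges occurring at leaves of the subtree rooted at $x$. As a byproduct, disjointness shows that $top(v)$ is well defined whenever $v$ has degree at least $1$: the set $\{y : E(v)\subseteq\beta(y)\}$ is upward closed by monotonicity and nonempty (it contains $\rho$, where $\beta(\rho)=E$), and the nonempty $E(v)$ cannot be contained in both disjoint children-labels, so there is a unique deepest such node.

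For Part 1, I would take $v\in\partial\beta(x)$, so $v$ is incident to some $e\in\beta(x)$ and some $f\in E\setminus\beta(x)$. Since $\beta(x)=\beta(x_\ell)\cup\beta(x_r)$, the edge $e$ lies in (say) $\beta(x_\ell)$, while $f\notin\beta(x)\supseteq\beta(x_\ell)$ lies outside it; hence $v$ is incident to an edge of $\beta(x_\ell)$ and an edge of $E\setminus\beta(x_\ell)$, giving $v\in\partial\beta(x_\ell)$. For Part 3, assume $v\in V(G^{in}_x)\setminus\partial\beta(x)$. Membership in $V(G^{in}_x)$ means some edge incident to $v$ lies in $\beta(x)$; since $v\notin\partial\beta(x)$, no edge incident to $v$ can lie in $E\setminus\beta(x)$. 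Therefore every underlying (and hence every temporal) edge incident to $v$ belongs to $\beta(x)$ and survives in $G^{in}_x$.

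Part 2 is the step I expect to be the main obstacle. Here $x=top(v)$ is assumed to be an inner node, equivalently $v$ has degree at least $2$; a degree-$1$ vertex has $top(v)$ a leaf and never lies on any boundary, so the claim is only meaningful in this case. By maximality of $top(v)$ among nodes $y$ with $E(v)\subseteq\beta(y)$, neither child can contain all of $E(v)$ (otherwise that child would be a deeper qualifying node), so there exist $e_\ell\in E(v)\setminus\beta(x_\ell)$ and $e_r\in E(v)\setminus\beta(x_r)$. Because $E(v)\subseteq\beta(x)=\beta(x_\ell)\cup\beta(x_r)$, the edge missing from one child must lie in the other: $e_\ell\in\beta(x_r)$ and $e_r\in\beta(x_\ell)$. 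Now $e_r\in E(v)\cap\beta(x_\ell)$ witnesses an edge incident to $v$ inside $\beta(x_\ell)$ while $e_\ell$ is incident to $v$ outside it, so $v\in\partial\beta(x_\ell)$; the symmetric pairing (with $e_\ell$ inside and $e_r$ outside) yields $v\in\partial\beta(x_r)$. The only delicate points are justifying that \emph{both} children genuinely fail to contain $E(v)$ (from depth-maximality of $top(v)$) and the cross-placement of $e_\ell,e_r$ forced by disjointness, and both reduce to the two structural properties established at the outset.
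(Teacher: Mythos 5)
Your proof is correct and follows essentially the same route as the paper's: Part 1 pushes the witnessing edges down through $\beta(x)=\beta(x_\ell)\cup\beta(x_r)$, Part 2 uses depth-maximality of $top(v)$ to rule out either child containing all of $E(v)$ (you argue this directly where the paper phrases it as a two-case contradiction), and Part 3 is the same observation about the boundary. Your explicit monotonicity/disjointness preliminaries and the well-definedness of $top(v)$ are a slight tightening of what the paper leaves implicit, not a different argument.
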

\begin{proof}
\begin{enumerate}
\item Since $v \in \partial \beta(x)$ it means that some but not all edges incident on $v$ in $G$ appear in $\beta(x)$. Since $\beta(x) = \beta(x_\ell) \cup \beta(x_r)$, it implies that some but not all edges incident on $v$ must appear either in $\beta(x_\ell)$, or $\beta(x_r)$, or both.
\item If $top(v) = x$ then $E(v) \subseteq \beta(x)$. Suppose for contradiction that $v \not\in \partial \beta(x_\ell)$. This can happen for two reasons: either (1) $E(v) \subseteq \beta(x_\ell)$, or (2) $E(v) \cap \beta(x_\ell) = \emptyset$. In case (1) we obtain a contradiction with the definition of $top(v)$ since  $x_\ell$ is further from the root than $x$ and it still contains all of $E(v)$. In case (2) observe that we must have $E(v) \subseteq \beta(x_r)$, thus obtaining a contradiction with the definition of $top(v)$ again since $x_r$ is further from the root than $x$ and it still contains all of $E(v)$.
\item Since $v \in V(G^{in}_x) \setminus \partial \beta(x)$ it means that there is at least one edge incident on $v$ in $V(G^{in}_x)$. Since $v$ is not in the boundary of $\beta(x)$, it means that all edges incident on $v$ in $G$ must be present in $\beta(x)$.
\end{enumerate}
\end{proof}

\begin{algorithm}[H]
\label{alg:sep_bw_2}
 \caption{This algorithm finds a restricted $(s,z)-$temporal separator in a temporal graph $G$ with $bw(G_\downarrow) \le 2$.}
\SetAlgoLined
\SetKwFunction{RTS}{RTS}
\SetKwFunction{ExistsRestrictedPath}{ExistsRestrictedPath}
\SetKwProg{Fn}{Function}{:}{}
\Fn{\RTS{$G, s, z$}}{
    \If{\ExistsRestrictedPath{$G,s,z$}=false}{
        \Return $\emptyset$\;
    }
    \For{$v \in V(G) \setminus \{s,z\}$}{
        \If{\ExistsRestrictedPath{$G\setminus \{v\},s,z$}=false}{
            \Return $\{v\}$\;
        }
    }
    \uIf{$top(s)=top(z)$}{
        \Return \RTS{$G^{in}_{\rho_\ell}, s, z$} $\cup$ \RTS{$G^{in}_{\rho_r}, s, z$}\;
    }
    \uElseIf{$top(s), top(z)$ are not ancestors of each other}{
        \Return $\partial \beta(top(z))$\;
    }
    \Else{ 
        \tcc{assume $top(z)$ is ancestor of $top(s)$, otherwise swap $s$ and $z$}
        \uIf{$z \not\in \partial \beta(top(s))$}{
            \Return $\partial \beta(top(s))$\;
        }
        \uElseIf{$\partial \beta(top(s))=\{z\}$}{
            \Return \RTS{$G^{in}_{top(s)}, s, z$}\;
        }
        \Else{
            \tcc{$\partial \beta(top(s)) = \{z,q\}, \partial \beta(top(s)_\ell) = \{s, z\}, \partial \beta(top(s)_r) = \{s, q\}$}
            $S \gets$ \RTS{$G^{in}_{top(s)_\ell}, s, z$}\;
            \eIf{\ExistsRestrictedPath{$G\setminus S,s,z$}}{
                \Return $S \cup \{q\}$\;
            } {
                \Return $S$\;
            }
        }
    }    
}
 \end{algorithm}

Now, we are ready to describe our algorithm, which is denoted by $RTS$. The algorithm starts by checking if there is a restricted temporal path from $s$ to $z$ in $G$, and if such a path does not exist then the algorithm immediately returns $\emptyset$. Then the algorithm checks if there exists a restricted temporal separator of size $1$ by testing whether there is a restricted temporal path in $G \setminus \{v\}$ for each $v \in V(G) \setminus\{s,z\}$. Then the algorithm computes $top(s)$ and $top(z)$ and the computation splits into three cases: (1) if $top(s) = top(z)$; (2) if $top(s)$ and $top(z)$ are not on the same root-to-leaf path in $T$ (i.e., neither one is an ancestor of another); and (3) if one of $top(s), top(z)$ is an ancestor of another. We shall later see that case (1) implies that $top(s) = top(z) = \rho$. In this case, the algorithm invokes itself recursively on the two subtrees of $T$ -- the subtree rooted at the left child of $\rho$ and the subtree rooted at the right child of $\rho$. The separators obtained on these two subtrees correspond to separators of $G^{in}_{\rho_\ell}$ and $G^{in}_{\rho_r}$ and their union is returned as the separator for $G$. In case (2) the algorithm returns the boundary of $\beta(top(z))$ (it could return the boundary of $\beta(top(s))$ instead -- it does not make a difference) as the answer. In case (3), we assume without loss of generality that $top(z)$ is the ancestor of $top(s)$, and handling of this case depends on whether $z$ belongs to the boundary of $\beta(top(s))$ or not. In fact, this case splits into three subcases: (3.1) if $z \not\in \partial \beta(top(s))$ then the algorithm immediately returns $\partial \beta(top(s))$; (3.2) if $\partial \beta(top(s)) = \{z\}$ then the algorithm invokes itself recursively on $G^{in}_{top(s)}$; and (3.3) if $\partial \beta(top(s)) = \{z,q\}$ for some vertex $q \neq s, z$ then the algorithm first invokes itself recursively on $G^{in}_{top(s)_\ell}$ (assuming $\partial \beta(top(s)_\ell) = \{s,z\}$) and stores the answer in $S$. If $S$ proves to be a separator in $G$ then $S$ is returned, otherwise, $q$ is added to $S$ and returned. The pseudocode is presented in Algorithm~\ref{alg:sep_bw_2}.

 \begin{theorem}
 Algorithm~\ref{alg:sep_bw_2} correctly computes a minimum-sized restricted path $(s,z)$-temporal separator for a temporal graph $G$ such that $bw(G_\downarrow) \le 2$.
 \end{theorem}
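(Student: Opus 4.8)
The plan is to prove both correctness (the returned set is a restricted $(s,z)$-temporal separator) and optimality (it has minimum size) simultaneously, by structural induction on the branch decomposition subtree on which $RTS$ is invoked. Every recursive call is made on a subgraph $G^{in}_y$ for a node $y$ that is a proper descendant of the current root, so the induction is well-founded and termination follows. The two guards at the top dispose of the base cases: if \texttt{ExistsRestrictedPath} reports no path, then $\emptyset$ is trivially optimal; and if some single vertex $v$ separates $s$ from $z$, then since a path is already known to exist we have $OPT \ge 1$, so $\{v\}$ is optimal. Throughout I would repeatedly use the width bound $|\partial\beta(x)| \le 2$ together with the three parts of Lemma~\ref{lem:helper_bw_2} and the defining property that $v \in \partial\beta(x)$ exactly when $v$ is incident both to an edge of $\beta(x)$ and to an edge of $E\setminus\beta(x)$. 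I would also record once the only property of the restricted family that the proof needs, namely sub-path closure: any contiguous sub-path of a restricted path is restricted (for the deadline family this holds because the travelling time of a sub-path is at most that of the whole path).

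The first structural fact I would establish is that case (1), $top(s)=top(z)=x$, forces $x=\rho$. Indeed Lemma~\ref{lem:helper_bw_2}(2) puts both $s$ and $z$ in $\partial\beta(x_\ell)$ and in $\partial\beta(x_r)$, so by the size-$2$ bound $\partial\beta(x_\ell)=\partial\beta(x_r)=\{s,z\}$; since all edges at $s$ and at $z$ lie in $\beta(x)$, neither is in $\partial\beta(x)$, and Lemma~\ref{lem:helper_bw_2}(1) then gives $\partial\beta(x)\subseteq\{s,z\}$, whence $\partial\beta(x)=\emptyset$. If $x\neq\rho$ this makes $\beta(x)$ and $E\setminus\beta(x)$ nonempty and vertex-disjoint, contradicting connectivity of $G_\downarrow$. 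With $x=\rho$ the edge set splits as $\beta(\rho_\ell)\sqcup\beta(\rho_r)$ and the two induced subgraphs meet only in $\{s,z\}$, so every $s$–$z$ path lies entirely inside one of $G^{in}_{\rho_\ell},G^{in}_{\rho_r}$; the union of optimal separators of the two parts is therefore an optimal separator of $G$, the lower bound following because any separator of $G$ restricted to a part separates that part and the two parts share no internal vertex. Cases (2) and (3.1) are then the ``a size-two boundary is a cut'' cases: in (2), incomparability of $top(s)$ and $top(z)$ gives $\beta(top(s))\cap\beta(top(z))=\emptyset$, so all edges at $s$ avoid $\beta(top(z))$ while all edges at $z$ lie inside it, making $\partial\beta(top(z))$ a cut; in (3.1), $z\notin\partial\beta(top(s))$ together with $top(z)$ being a proper ancestor of $top(s)$ forces every edge at $z$ outside $\beta(top(s))$, so $\partial\beta(top(s))$ is again a cut. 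In both subcases the returned boundary has size exactly $2$, matching $OPT=2$, because separators of size $\le 1$ were already ruled out.

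For case (3.2), where $\partial\beta(top(s))=\{z\}$, I would argue that $z$ is the unique portal out of the region $\beta(top(s))$, so the prefix of any $s$–$z$ path up to its first visit of $z$ stays inside $G^{in}_{top(s)}$; by sub-path closure the restricted $s$–$z$ paths of $G$ and of $G^{in}_{top(s)}$ have the same minimal separators, so the recursive call is correct. The main obstacle is case (3.3). Here I would first pin down the local geometry asserted in the code comment: with $x=top(s)$ and $\partial\beta(x)=\{z,q\}$, Lemma~\ref{lem:helper_bw_2} forces (after naming the children so that the one whose boundary contains $z$ is $x_\ell$) $\partial\beta(x_\ell)=\{s,z\}$ and $\partial\beta(x_r)=\{s,q\}$, with all inside edges at $z$ lying in $\beta(x_\ell)$ and all inside edges at $q$ lying in $\beta(x_r)$. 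This splits the graph into a left chunk $A$ (edges $\beta(x_\ell)$), a right chunk $B$ (edges $\beta(x_r)$), and the outside $C$ (edges $E\setminus\beta(x)$), glued pairwise only at $A\cap B=\{s\}$, $A\cap C=\{z\}$, $B\cap C=\{q\}$. The decisive classification is that every restricted $s$–$z$ path is either (I) contained in $A$, equivalently a restricted $s$–$z$ path of $G^{in}_{x_\ell}$, or (II) passes through $q$; a type-II path is cut by the single vertex $q$, while no internal vertex of $A$ can cut a type-II path and $q$ cannot cut a type-I path. Hence the vertices blocking the two families are forced to be disjoint, giving that $OPT(G)$ equals $OPT(G^{in}_{x_\ell})$ plus one if a type-II path exists and plus zero otherwise — which is exactly what the algorithm returns, since it sets $S=RTS(G^{in}_{x_\ell},s,z)$ (optimal for type~I by induction) and adds $q$ precisely when $S$ fails to separate $G$, i.e.\ precisely when a type-II path survives. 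I expect the delicate bookkeeping — showing that an arbitrary separator of $G$ decomposes into an $A$-part of size $\ge OPT(G^{in}_{x_\ell})$ and a disjoint $B\cup C$-part of size $\ge 1$ whenever a type-II path exists — to be the hardest part of the write-up, and I would phrase it so it relies only on sub-path closure and on the disjoint gluing of $A,B,C$, not on any feature specific to the deadline constraint.
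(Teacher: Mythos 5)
Your proof is correct and follows essentially the same route as the paper's: the same case analysis driven by Lemma~\ref{lem:helper_bw_2}, with the same arguments that case (1) forces $top(s)=top(z)=\rho$, that the size-two boundaries are optimal cuts in cases (2) and (3.1) once separators of size $\le 1$ are ruled out, and the same reduction in (3.2) and the ``recurse on the left child, then add $q$ if needed'' logic in (3.3). If anything, your classification in case (3.3) into paths contained in $\beta(top(s)_\ell)$ versus paths through $q$ is slightly more careful than the paper's assertion that every restricted path lies entirely within one child's edge set, since it correctly accounts for paths that exit $\beta(top(s))$ through $q$ and reach $z$ via edges outside $\beta(top(s))$.
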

 \begin{proof}
 The proof proceeds by the case analysis reflecting the structure of the algorithm. Clearly, the algorithm correctly identifies when there is a separator of size $0$ or $1$ since it performs brute-force checks for these special cases. Assuming that there is no separator of size $\le 1$, we discuss the correctness for the remaining three cases.

 Case (1): $top(s) = top(z) = x \in V(T)$. Observe that Lemma~\ref{lem:helper_bw_2}, item 1, implies that $s, z \in \partial \beta(x_\ell)$ and $s, z \in \partial \beta(x_r)$. Since the branchwidth is $2$, it implies that $\partial \beta(x_\ell) = \partial \beta(x_r) = \{s, z\}$. In addition, we know that $s, z \not\in \partial \beta(x)$ by the definition of $top()$. And since every vertex in $\partial \beta(x)$ must appear in $\partial \beta(x_\ell)$ or $\partial \beta(x_r)$ (using Lemma~\ref{lem:helper_bw_2}, item 2), we conclude that $\partial \beta(x) = \emptyset$. By Lemma~\ref{lem:helper_bw_2}, item 3,  every vertex in $G^{in}_x$ has all its edges from $G$. Therefore $G^{in}_x$ is disconnected from the rest of $G$. However, we assume that $G$ is connected, so we must have $G^{in}_x = G$. This is true only when $x = \rho$. Thus, we must have in this case that $top(s) = top(z) = \rho$. Observe that if $P$ is a restricted temporal path between $s$ and $z$ (that does not have $s$ or $z$ as intermediate nodes) then it cannot use edges from both $\beta(\rho_\ell)$ and $\beta(\rho_r)$. Suppose, for contradiction, that $P$ uses both kinds of edges, then there must be a vertex $v$ on this path incident on $e_1$ and $e_2$ such that $e_1 \in \beta(x_\ell)$ and $e_2 \in \beta(x_r)$. Since $\beta(x_\ell), \beta(x_r)$ partition all the edges, it implies that $e_2 \not\in \beta(x_\ell)$. This means that $v \in \partial \beta(x_\ell) = \{s,z\}$, but $v \neq s, z$, giving a contradiction. Therefore, the minimum size restricted path temporal separator in $G$ is the union of minimum size restricted path temporal separators in $G^{in}_{\rho_\ell}$ and $G^{in}_{\rho_r}$, which is precisely what our algorithm outputs.

 Case (2): $top(s)$ and $top(z)$ do not lie on the same root-to-leaf path in $T$. One of the consequences of Lemma~\ref{lem:helper_bw_2}, item 3, is that removing $\partial \beta(x)$ from $G$ separates all vertices in $V(G^{in}_x)$ from the rest of the graph. Therefore, removing $\partial \beta(top(z))$ separates all vertices in $G^{in}_{top(z)}$ from the rest of the graph. Observe that $z \in V(G^{in}_{top(z)})$ and $s \not\in V(G^{in}_{top(z)})$ (by the condition of this case). Therefore removing $\partial \beta(top(z))$ separates $s$ from $z$. We claim that this is the minimum separator in this case. This is because when this line is reached we are guaranteed that there is no separator of size $1$, and $|\partial \beta(top(z))| \le 2$ (in fact, it must be then equal to $2$). We only need to be careful that neither $z$ nor $s$ is in $\partial \beta(top(z))$, but it is clear from the definition of $top()$ and the case condition.

 Case (3): $top(z)$ is an ancestor of $top(s)$ (if $top(s)$ is an ancestor of $top(z)$ then we can exchange the roles of $s$ and $z$ for the sake of the argument). This case has three subcases.

 Subcase (3.1): $z \not\in \partial \beta(top(s))$. This is similar to case (2) described above. The algorithm can return $\partial \beta(top(s))$ as a minimum size separator.

 Subcase (3.2): $\partial \beta(top(s))=\{z\}$. In this case, the structure of the graph is such that $G^{in}_{top(s)}$ is connected to the rest of the vertices in $G$ via the node $z$, while vertex $s$ lies in $G^{in}_{top(s)}$. Thus, to separate $z$ from $s$, it is sufficient to separate them in $G^{in}_{top(s)}$, which is what the algorithm does.

 Subcase (3.3); $\partial \beta(top(s)) = \{z, q\}$. By Lemma~\ref{lem:helper_bw_2}, item 2, it follows that $s \in \partial \beta(top(s)_\ell)$ and $s \in \partial \beta(top(s)_r)$. By Lemma~\ref{lem:helper_bw_2}, item 1, it follows that $z, q \in \beta(top(s)_\ell) \cup \beta(top(s)_r)$. Since branchwidth is at most $2$, we have (without loss of generality) that $\partial \beta(top(s)_\ell) = \{s, z\}$ and $\partial \beta(top(s)_r) = \{s, q\}$. By an argument similar to the one in case (1), we can establish that any restricted $(s,z)$ temporal path (that does not use $s$ or $z$ as intermediate nodes) must either consist entirely of edges in $\beta(top(s)_\ell)$ or entirely of edges in $\beta(top(s)_r)$. Thus, we can compute the two separators and take their union; however, we can simplify the calculation observing that the only separator we need to consider for the $G^{in}_{top(s)_r}$ is $\{q\}$, since $G^{in}_{top(s)_r}$ is connected to the rest of $G$ only through $q$ and $s$.
 \end{proof}


\begin{corollary}
Given a temporal graph $G = (V, E, \tau)$ with $bw(G_\downarrow) \le 2$, the problem $(s, z, t)$-Temporal Separator is solvable in time $O(|V ||E||\mathcal{T}|)$ where $\mathcal{T} = \{t(e) : e \in E(s)\}$.
\end{corollary}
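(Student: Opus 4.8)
The plan is to split the corollary into its two obligations: correctness and running time. Correctness is immediate from the theorem preceding this corollary, which shows that $RTS$ (Algorithm~\ref{alg:sep_bw_2}) returns a \emph{minimum} restricted-path $(s,z)$-temporal separator for \emph{any} restricted family of paths specified through the oracle \texttt{ExistsRestrictedPath}. The $(s,z,t)$-Temporal Separator problem is exactly the instance in which the restricted family is the set of $(s,z,t)$-temporal paths, so the only thing to supply is the oracle and its cost. First I would instantiate \texttt{ExistsRestrictedPath}$(G,s,z)$ by computing a fastest $(s,z)$-temporal path and testing its travelling time against $t$; by Lemma~\ref{lemma:temporalPath} this costs $O(|\mathcal{T}||E|)$, where $\mathcal{T} = \{t(e) : e \in E(s)\}$ is precisely the set $S$ of time labels of edges incident on $s$ appearing in that lemma. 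Thus the entire problem reduces to bounding the total cost of the oracle calls made across the recursion.

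Next I would argue that every non-oracle operation is dominated. Computing $top(s)$, $top(z)$, the boundaries $\partial\beta(\cdot)$, and the induced temporal graphs $G^{in}_{\cdot}$ can all be precomputed once (the width-$2$ branch decomposition itself is obtained in linear time, as recalled in Section~\ref{sec:prelim}), so each invocation of $RTS$ spends $O(|V|)$ oracle calls in its loop over $v \in V(G)\setminus\{s,z\}$ plus $O(1)$ further calls, each of cost $O(|\mathcal{T}||E|)$ when charged against the whole graph. Hence the running time is $O(|\mathcal{T}||E|)$ times the total number of oracle calls, and it suffices to prove that this total is $O(|V|)$. A first structural observation is that branchwidth at most $2$ forces $G_\downarrow$ to be (a subgraph of) a series-parallel graph, which has $O(|V|)$ edges; therefore the branch-decomposition tree $T$ has $O(|V|)$ leaves and $O(|V|)$ nodes in total, and since each recursive invocation is attached to a distinct node of $T$ (invocations only descend to proper descendants, and the two children produced in Case~(1) live in disjoint subtrees), the number of invocations is $O(|V|)$.

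The remaining and more delicate task is to show that the loop work, i.e.\ $\sum_x |V(G^{in}_x)\setminus\{s,z\}|$ over all visited nodes $x$, is itself $O(|V|)$. Here I would lean on Lemma~\ref{lem:helper_bw_2}: in the branching Case~(1) we have $\partial\beta(\rho_\ell)=\partial\beta(\rho_r)=\{s,z\}$, so every non-terminal vertex has all of its edges inside exactly one child and therefore appears in exactly one of the two recursive subgraphs. Consequently the loop iterations of the two sibling subproblems \emph{partition} the non-terminal vertices, and branching never duplicates work. Charging each oracle call to the vertex it tests, the branching cases alone would give the desired $O(|V|)$ total.

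The hard part will be the single-child recursions, Cases~(3.2) and~(3.3), which recurse on $G^{in}_{top(s)}$ (respectively $G^{in}_{top(s)_\ell}$) after the parent has already looped over a superset of that subgraph's vertices; a naive bound allows a vertex to be re-tested once per ancestor on a long descending chain, inflating the estimate by the recursion depth. To close this gap I would show that each such step passes to a \emph{strictly smaller} induced temporal subgraph whose boundary is a single vertex, so the discarded edges $\beta(\text{parent})\setminus\beta(\text{child})$ are gone permanently; I would then set up an amortized charging in which each vertex is billed $O(1)$ times overall by combining the Case~(1) partition with the fact that the $|E_x|$ entering the oracle cost $O(|\mathcal{T}_x||E_x|)$ shrinks as the recursion descends, so that the per-step costs along a single-child chain telescope. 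Verifying this telescoping — or, if the geometric decrease fails, identifying the correct potential function that absorbs the single-child re-examinations — is the crux of the running-time analysis; once it is in place, the total oracle count is $O(|V|)$ and the claimed $O(|V||E||\mathcal{T}|)$ bound follows.
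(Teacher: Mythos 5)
The paper gives no proof of this corollary at all -- it is stated bare after the correctness theorem -- so there is nothing to match your argument against; it has to stand on its own. The parts you do complete are right and are surely what the authors intended: correctness is inherited from the theorem by instantiating \texttt{ExistsRestrictedPath} with the fastest-path test of Lemma~\ref{lemma:temporalPath}, whose cost $O(|\mathcal{T}||E|)$ is exactly the per-call factor in the claimed bound, and your partition argument for Case~(1) (a non-terminal vertex with edges in both $\beta(\rho_\ell)$ and $\beta(\rho_r)$ would lie in $\partial\beta(\rho_\ell)=\{s,z\}$, a contradiction) correctly shows that branching never duplicates loop work.

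The genuine gap is the one you yourself flag and then leave open: the single-child recursions of Cases~(3.2) and~(3.3). Your proof obligation is that the total number of oracle calls is $O(|V|)$, but the algorithm as written does not obviously satisfy this. In Case~(3.2) the child instance is $G^{in}_{top(s)}$, and nothing in the construction prevents $\beta(\rho)\setminus\beta(top(s))$ from being a single edge; a branch decomposition of width $2$ may be an unbalanced caterpillar, so one can have a descending chain of $\Theta(|E|)$ single-child invocations, each of which re-runs the brute-force loop over $\Theta(|V|)$ vertices. That is $\Theta(|V||E|)$ oracle calls, and your proposed telescoping does not rescue it: the quantity $|\mathcal{T}_x||E_x|$ need not decrease geometrically along such a chain, and charging each call ``to the vertex it tests'' fails precisely because the same vertex is tested once per ancestor on the chain. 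So the crux you defer is not a verification detail but the entire content of the running-time claim, and to close it you would need either a different accounting (e.g., bounding $\sum_x |V(G^{in}_x)|\cdot|E(G^{in}_x)|$ directly, or showing the size-$\le 1$ pre-checks force long chains to terminate early) or a modified algorithm that skips the per-invocation brute-force loop on single-child descents. As it stands, the proposal identifies all the right ingredients but does not prove the stated $O(|V||E||\mathcal{T}|)$ bound.
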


\subsection{Temporal Graphs with a ``Tree-like'' Underlying Graph}
\label{sec:tree_like}

In this section, we present a polynomial time greedy algorithm (motivated by the point-cover interval problem) for computing a path restricted $(s,z)$-temporal separator (see Section~\ref{sec:bounded_branchwidth}) on a temporal graph $G$ such that $G_{\downarrow}\setminus\{s,z\}$ is a tree if the existence of a restricted $(s,z)$-temporal path could be checked in polynomial time. 

We assume that we are given a temporal graph $G$ such that $G_\downarrow \setminus\{s,z\}$ is a tree, which we denote by $T$. For a pair of nodes $(u,w)$, we let $P_{u,w}$ denote the unique shortest path in $T$ between $u$ and $w$. For a vertex $v \in V(T)$, we define a removal list of $v$, denoted by $RL_v$, to consist of all unordered pairs $(u,w)$ such that $v \in V(P_{u,w})$ and there exists a restricted $(s,z)$-temporal path in $G$ using the edges of $P_{u,w}$. For a pair $u,w \in V(T)$, we define two temporal graphs: (1) $G^1_{u,w}$ is $G$ induced on the edges of $E(P_{u,w}) \cup \{(s,u), (v,z)\}$, and (2) $G^2_{u,w}$ is $G$ induced on the edges of $E(P_{u,w}) \cup \{(s,v),(u,z)\}$. The removal lists for all vertices in $V(T)$ can be computed efficiently as follows. Initialize all removal lists to be empty. For each pair of vertices $u,w \in V(T)$ check if there is any restricted $(s,z)$-temporal path in $G^1_{u,w}$ or $G^2_{u,w}$, and if so, then add $(u,w)$ to the removal lists of all nodes in $P_{u,w}$. Let $\mathcal{U} = \bigcup_{v \in V(T)} RL_v$ be the set of all pairs of nodes that appear in removal lists. The following observation is immediate from the definitions and shows that computing a minimum size restricted path $(s,z)$-temporal separator reduces to covering $\mathcal{U}$ with as few removal lists as possible.

\begin{observation}
\label{obs:tree-removing-list}
A set of $S$ is a restricted path $(s,z)$-temporal separator if and only if $\bigcup_{v \in S} RL_v = \mathcal{U}$.
\end{observation}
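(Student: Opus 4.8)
The plan is to exploit the one structural feature that the tree hypothesis buys us: in $T = G_\downarrow \setminus \{s,z\}$ there is a \emph{unique} simple path between any two vertices. Concretely, I would first establish the correspondence that underlies the whole statement: every restricted $(s,z)$-temporal path $P$, after deleting its initial edge out of $s$ and its final edge into $z$, traces a simple path inside $T$ from its first tree-vertex $u$ to its last tree-vertex $w$; since $P$ is simple and $T$ is a tree, this trace is forced to be exactly $P_{u,w}$, and the set of internal vertices of $P$ equals $V(P_{u,w})$. In particular $P$ uses only the edges $E(P_{u,w})$ together with one connector edge from $s$ and one to $z$, so $P$ is a restricted $(s,z)$-temporal path living entirely in $G^1_{u,w}$ or in $G^2_{u,w}$ (according to the orientation in which the endpoints of $P_{u,w}$ attach to $s$ and $z$). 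Consequently the pair $(u,w)$ belongs to $\mathcal U$, and moreover $(u,w)\in RL_v$ for \emph{every} $v\in V(P_{u,w})$. Conversely, by definition each $(u,w)\in\mathcal U$ is witnessed by some restricted $(s,z)$-temporal path whose internal vertex set is precisely $V(P_{u,w})$.

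With this correspondence in hand, both directions reduce to elementary set reasoning. For the ``if'' direction I would argue by contradiction: assume $\bigcup_{v\in S}RL_v=\mathcal U$ but that some restricted $(s,z)$-temporal path $P$ survives in $G\setminus S$ (so $V(P)\cap S=\emptyset$). Let $(u,w)$ be the pair traced by $P$; by the correspondence $(u,w)\in\mathcal U$, hence $(u,w)\in RL_v$ for some $v\in S$. The membership $(u,w)\in RL_v$ forces $v\in V(P_{u,w})\subseteq V(P)$, contradicting $V(P)\cap S=\emptyset$. Thus $S$ must be a restricted path $(s,z)$-temporal separator.

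For the ``only if'' direction, the inclusion $\bigcup_{v\in S}RL_v\subseteq\mathcal U$ is immediate from the definition of $\mathcal U$, so only the reverse inclusion needs work. Take any $(u,w)\in\mathcal U$ and let $P$ be a witnessing restricted $(s,z)$-temporal path (in $G^1_{u,w}$ or $G^2_{u,w}$), whose internal vertices are exactly $V(P_{u,w})$. Since $S$ separates $s$ from $z$ with respect to restricted paths, $P$ must contain a vertex of $S$; as $s,z\notin S$, that vertex lies in $V(P_{u,w})$, giving some $v\in S$ with $v\in V(P_{u,w})$ and hence $(u,w)\in RL_v$. Therefore $(u,w)\in\bigcup_{v\in S}RL_v$, completing the equality.

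The one step that genuinely requires the hypothesis — and thus the one I expect to be the main obstacle — is the structural claim that the internal trace of a restricted temporal path is exactly the unique tree path $P_{u,w}$, with internal vertex set equal to $V(P_{u,w})$. This is what makes ``removing a single vertex of $P_{u,w}$ destroys the path'' equivalent to ``$(u,w)$ is covered.'' It relies on temporal paths being simple (so the trace cannot merely contain $V(P_{u,w})$ as a proper subset by wandering in $T$) and on the uniqueness of simple paths in a tree. I would also dispose of the degenerate cases separately: when $u=w$ the path is $s\to u\to z$ with $P_{u,u}=\{u\}$ and the argument still goes through with $V(P_{u,u})=\{u\}$, and I would note that an instance admitting a restricted $s$--$z$ path using no tree vertex (e.g.\ a direct $s$--$z$ edge) is infeasible and can be excluded by a trivial preliminary check. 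Once this structural fact is fixed, the remainder is the routine containment bookkeeping above.
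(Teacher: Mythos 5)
Your argument is correct and is exactly the unwinding of the definitions that the paper has in mind: the paper states this observation without proof (calling it ``immediate from the definitions''), and your structural claim that the internal trace of a simple restricted $(s,z)$-temporal path must be the unique tree path $P_{u,w}$ --- so that ``$P$ survives the removal of $S$'' is equivalent to ``$V(P_{u,w})\cap S=\emptyset$'' --- is precisely the fact that makes it immediate. Your explicit treatment of the degenerate case of a restricted path using no tree vertex (e.g.\ a direct $s$--$z$ edge), which must be excluded by a preliminary feasibility check for the ``if'' direction to hold, is a detail the paper glosses over but does not change the substance.
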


A vertex $v$ is called topmost if there exists a pair $(u,w) \in RL_v$ such that $(u,w) \not\in RL_{parent(v)}$. Our greedy algorithm, called $GreedyRTS$, starts out with an empty solution $S = \emptyset$, and then adds more vertices to $S$ as follows. While there are non-empty removal lists, the algorithm selects a topmost vertex $v$ with maximum distance from the root of $T$, adds $v$ to the set $S$, and removes all pairs in $RL_v$ from the removing lists of all the other vertices. The pseudocode is given in Algorithm~\ref{alg:tree}.

\begin{algorithm}[]
 \caption{This algorithm computes a minimum sized restricted path $(s,z)$-temporal separator in a temporal graph $G$ when $G_\downarrow \setminus \{s,z\}$ is a tree $T$.}\label{alg:tree}
\SetAlgoLined
\SetKwFunction{ComputeRLs}{ComputeRLs}
\SetKwFunction{ExistsRestrictedPath}{ExistsRestrictedPath}
\SetKwFunction{GreedyRTS}{GreedyRTS}
\SetKwProg{Fn}{Function}{:}{}
\Fn{\ComputeRLs{$G, s, z$}}{
    $\mathcal{U} \gets \emptyset$\;
    \For{ $(u,w) \in V(T) \times V(T)$}{
        \If{\ExistsRestrictedPath{$G^1_{u,w}, s, z$} or \ExistsRestrictedPath{$G^2_{u,w}, s, z$}}{
            $\mathcal{U} \gets \mathcal{U} \cup \{(u,w)\}$\;
            \For{$v \in V(P_{u,w})$}{
                $RL_v \gets RL_v \cup \{(u,w)\}$\;                
            }
        }
    }
}
\Fn{\GreedyRTS{$G, s, z, RL, \mathcal{U}$)}}{
    $S \gets \emptyset$\;
    \While{$\mathcal{U} \neq \emptyset$}{
        $v \gets$ furthest node from the root of $T$ such that $\exists (u,w) \in RL_v \setminus RL_{parent(v)}$\;
        $S \gets S \cup \{v\}$\;
        $\mathcal{U} \gets \mathcal{U} \setminus RL_v$\;
        \For{$w \in V(T)$}{
            $RL_w \gets RL_w \setminus RL_v$\;
        }
    }    
    \Return $S$\;
}
\end{algorithm}

\begin{theorem}
\label{theorem:solvabality_in_tree_base}
Algorithm~\ref{alg:tree} computes a minimum-sized restricted path $(s,z)$-temporal separator in a temporal graph $G$ with $G_\downarrow \setminus \{s,z\}$ being a tree.
\end{theorem}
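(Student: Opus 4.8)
The plan is to recognize that, by Observation~\ref{obs:tree-removing-list}, the problem is equivalent to a \emph{hitting set for paths in a tree}: each pair $p=(u,w)\in\mathcal{U}$ is identified with the tree path $P_{u,w}$, and $p\in RL_v$ exactly when $v\in V(P_{u,w})$, so a set $S$ is a restricted separator iff every path $P_p$, $p\in\mathcal U$, contains at least one vertex of $S$. Since $T$ is rooted, I would define, for a path $P_{u,w}$, its \emph{apex} to be the vertex of $P_{u,w}$ closest to the root; note that $v$ is a ``topmost'' vertex for $p=(u,w)$ in the sense used by the algorithm (i.e.\ $p\in RL_v\setminus RL_{parent(v)}$) precisely when $v$ is the apex of $P_{u,w}$. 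Thus the quantity the \texttt{while}-loop maximizes, ``furthest node $v$ from the root with some $(u,w)\in RL_v\setminus RL_{parent(v)}$,'' is exactly the deepest apex among all still-uncovered paths. I would first dispose of the bookkeeping: the procedure \texttt{ComputeRLs} produces $RL$ and $\mathcal U$ in agreement with their definitions, so \texttt{GreedyRTS} is just the greedy hitting-set routine, which terminates and returns a valid separator because every remaining pair always has an apex (its apex $v$ satisfies $p\in RL_v\setminus RL_{parent(v)}$), so the loop makes progress until $\mathcal U=\emptyset$.

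The heart of the argument is an exchange lemma: if $v$ is the vertex chosen in the first iteration (the deepest apex over all of $\mathcal U$), then some minimum-size hitting set contains $v$. I would prove this by taking any optimal solution $OPT$ and a path $P$ whose apex is $v$. Every vertex of $P$ lies in the subtree rooted at $v$, so the vertex $w\in OPT\cap V(P)$ (which exists since $OPT$ hits $P$) is a descendant of $v$ or equal to $v$. I then claim that every uncovered path $Q$ with $w\in V(Q)$ also satisfies $v\in V(Q)$, which immediately yields that $OPT'=(OPT\setminus\{w\})\cup\{v\}$ is still a valid hitting set of the same cardinality.

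This containment claim is the step I expect to be the main obstacle, and it is where the ``deepest apex'' choice is essential. The argument I would give is purely structural on the rooted tree: let $a(Q)$ be the apex of $Q$. Because $v$ is the deepest apex among uncovered paths and $Q$ is uncovered, $\mathrm{depth}(a(Q))\le \mathrm{depth}(v)$. Both $a(Q)$ and $v$ are ancestors of $w$ (the apex is the topmost vertex of $Q$, and $w$ lies in the subtree of $v$), hence they are comparable on the root-to-$w$ chain, and the depth inequality forces $a(Q)$ to be an ancestor of $v$. The path $Q$ consists of the two downward legs from $a(Q)$ to its endpoints; $w$ lies on one such leg, which is exactly the set of vertices lying between $a(Q)$ and that endpoint in the ancestor order. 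Since $v$ lies between $a(Q)$ and $w$ on that same chain, $v$ belongs to this leg, i.e.\ $v\in V(Q)$, completing the claim.

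With the exchange lemma in hand I would finish by induction on $|\mathcal U|$. After the greedy selects $v$ and deletes $RL_v$, the residual universe $\mathcal U\setminus RL_v$ with the updated lists is again an instance of hitting paths in $T$, and $OPT'\setminus\{v\}$ is an optimal hitting set for it; by the induction hypothesis the greedy solves this residual instance optimally, so $\{v\}$ together with the recursively computed set has size $|OPT'|=|OPT|$. Hence \texttt{GreedyRTS} returns a minimum-size set covering $\mathcal U$, which by Observation~\ref{obs:tree-removing-list} is a minimum-size restricted path $(s,z)$-temporal separator.
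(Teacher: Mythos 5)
Your proposal is correct and follows essentially the same route as the paper: both are greedy-exchange arguments whose crux is that for the deepest topmost vertex $v$ (your ``deepest apex''), any vertex $w$ of an optimal solution hitting a path with apex $v$ lies in the subtree of $v$ and satisfies $RL_w \subseteq RL_v$, so $w$ can be swapped for $v$. The only cosmetic differences are that you prove this containment directly via the ancestor chain to $w$ (the paper derives it by contradiction with the greedy choice) and you close with induction on the residual instance rather than the paper's ``maximal agreement prefix'' framing.
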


\begin{proof}
Let $S$ denote the solution produced by $GreedyRTS$. It is clear from the algorithm's description and Observation~\ref{obs:tree-removing-list} that $S$ is, indeed, a restricted path $(s,z)$-temporal separator. It is only left to show that there is no smaller temporal separator. Consider the order of vertices in $S$ in which they are included by $GreedyRTS$. For a minimum-sized separator $S_{opt}$ we can define $k$ to be the largest integer such that $S$ and $S_{opt}$ agree on the first $k$ vertices considered by $GreedyRTS$. Now, we fix a particular minimum size separator $S_{opt}$ that maximizes $k$. We will show that $k = |S|$, establishing the claim. Suppose for contradiction that $k < |S|$. Therefore, $S$ and $S_{opt}$ disagree on the $(k+1)^\text{st}$ vertex $x$, i.e., $x \in S$ and $x \not\in S_{opt}$. 

Vertex $x$ was selected by $GreedyRTS$ since there is a pair of nodes $u, w$ such that $(u,w) \in RL_x \setminus RL_{parent(x)}$. Since $(u,w)$ has not been removed from $\mathcal{U}$ at the time when $x$ was considered and $S_{opt}$ agreed with $S$ up until that point, it means that there must be some other vertex $x' \in S_{opt}$ such that $(u,w) \in RL_{x'}$. We claim that $RL_{x'} \subseteq RL_x$. First, observe that since $(u,w) \not\in RL_{parent(x)}$ and $(u,w) \in RL_{x'}$ it follows that $x'$ must be in the subtree of $T$ rooted at $x$. If we suppose, for contradiction, that there is some pair $(u', v') \in RL_{x'}$ such that $(u', v') \not\in RL_x$ then there must be a vertex $y$ on the path $P_{x, x'}$ such that $(u', v') \in RL_y$ and $(u', v') \not\in RL_{parent(y)}$. Then $y \neq x$ since $(u', v') \not\in RL_x$, and this contradicts the greedy choice property, namely, $y$ would be a topmost vertex that is located further from the root than $x$, so it should have been chosen by $GreedyRTS$.

Since $RL_{x'} \subseteq RL_x$, it follows that $S' = (S_{opt} \setminus \{x'\}) \cup \{x\}$ is another optimal solution that agrees on the first $k+1$ vertices considered by $GreedyRTS$. This contradicts the choice of $S_{opt}$ and finishes the proof of the theorem.
\end{proof}

Based on Lemma \ref{lemma:temporalPath}, the existence of a $(s,z,t)$-temporal path can be solved in polynomial time. Thus, the following theorem follows from Theorem \ref{theorem:solvabality_in_tree_base}.

\begin{theorem}
\label{thm:tree-based-main}
    The $(s,z,t)$-Temporal Separator problem is solvable in polynomial time on temporal graphs $G$ where $G_\downarrow \setminus \{s,z\}$ is a tree.
\end{theorem}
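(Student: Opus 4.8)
The plan is to derive this statement as an essentially immediate consequence of the restricted-path machinery already developed, so the proof should be a short corollary-style argument rather than a fresh construction. First I would observe that the $(s,z,t)$-Temporal Separator problem is exactly the instance of the ``restricted path $(s,z)$-Temporal Separator'' problem (introduced in Section~\ref{sec:bounded_branchwidth}) in which the restricted family of paths consists of those temporal $(s,z)$-paths $P$ with $\ttime(P) < t$, i.e.\ the $(s,z,t)$-temporal paths. Under this identification, Theorem~\ref{theorem:solvabality_in_tree_base} already asserts that Algorithm~\ref{alg:tree} outputs a minimum-sized restricted separator on any temporal graph $G$ with $G_\downarrow \setminus \{s,z\}$ a tree, provided the subroutine $ExistsRestrictedPath$ can be realized in polynomial time. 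Hence the whole task reduces to supplying such a subroutine for the travelling-time restriction and then accounting for the overall running time.

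Second, I would supply the subroutine using Lemma~\ref{lemma:temporalPath}. That lemma guarantees that, given any temporal graph and two terminals, the existence of an $(s,z,t)$-temporal path is decidable in time $O(|S||E|)$, which is polynomial; this is precisely the behaviour required of $ExistsRestrictedPath$ in the tree-based setting. I would further note that in Algorithm~\ref{alg:tree} this subroutine is invoked only on the auxiliary temporal graphs $G^1_{u,w}$ and $G^2_{u,w}$, each of which is itself a temporal graph, so Lemma~\ref{lemma:temporalPath} applies verbatim and every individual call runs in polynomial time.

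Third, with a polynomial-time $ExistsRestrictedPath$ in hand, Theorem~\ref{theorem:solvabality_in_tree_base} immediately yields correctness, and it remains only to confirm polynomiality of the whole procedure. Here I would check the bookkeeping: $ComputeRLs$ ranges over the $O(|V|^2)$ pairs $(u,w)$, performing two polynomial existence checks per pair and updating the removal lists along the unique tree path $P_{u,w}$ (of length at most $|V|$), while $GreedyRTS$ then performs at most $|V|$ iterations, each selecting a furthest topmost vertex and pruning removal lists, which is polynomial bookkeeping (and its correctness rests on Observation~\ref{obs:tree-removing-list} and the greedy-exchange argument of Theorem~\ref{theorem:solvabality_in_tree_base}). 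Summing these contributions gives a polynomial overall bound. Since the statement is genuinely a corollary, there is no real obstacle; the only point requiring care is precisely this accounting step, namely verifying that the existence subroutine is always called on bona fide temporal graphs so that Lemma~\ref{lemma:temporalPath} applies, and that the total number of such calls remains polynomial.
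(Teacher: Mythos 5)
Your proposal is correct and follows essentially the same route as the paper, which also derives the theorem as an immediate corollary of Theorem~\ref{theorem:solvabality_in_tree_base} by instantiating the restricted-path family as the $(s,z,t)$-temporal paths and using Lemma~\ref{lemma:temporalPath} to realize \emph{ExistsRestrictedPath} in polynomial time. Your additional bookkeeping of the running time of \emph{ComputeRLs} and \emph{GreedyRTS} is more explicit than what the paper writes, but it is the same argument.
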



\subsection{Temporal Graphs with Bounded Pathwidth}
\label{sec:disc_reduction}

In this section, we present a reduction from the \textit{Discrete Segment Covering (DISC-SC)} problem to the $(s,z,t)$-Temporal Separator problem on graphs with bounded pathwidth. In the DISC-SC problem, we are given a set $\Gamma$ of $n$ intervals (also called segments), on the rational line and a set $\mathcal{I}$ of unit-intervals on the rational line. We wish to find a subset of unit intervals $A \subseteq \mathcal{I}$ which covers all the segments in $\Gamma$. The objective is to minimize the size of $A$. An interval $I \in \mathcal{I}$ covers a segment $S\in \Gamma$ if at least one endpoint $S$ lies in $I$. A segment $S \in \Gamma$ is covered by a set of intervals $A$ if there is an interval $I \in A$ that covers $S$. We refer to the version of DISC-SC where all segments in $\Gamma$ have length bounded by $k$ as DISC-SC-$k$. DISC-SC problem is $\mathcal{NP}$-hard~\cite{bergren2020covering}. \cite{bergren2020covering} also shows that the DISC-SC problem remains $\mathcal{NP}$-hard when the length of all segments in $\Gamma$ are equal. DISC-SC-$1$ can be solved efficiently by a simple greedy algorithm \cite{bergren2020covering}. However, the hardness of DISC-SC-$k$ for general $k > 1$ remains open.

The following theorem serves as a warm-up, and it establishes a simple polynomial time reduction from DISC-SC to the $(s,z,t)$-Temporal Separator problem. 

\begin{theorem}
\label{thm:discrete_segment_covering_red1}
    There is a polynomial-time reduction from the DISC-SC problem to the $(s,z,t)$-Temporal Separator problem.
\end{theorem}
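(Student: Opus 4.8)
The plan is to observe that DISC-SC is itself essentially a Set Cover instance, and then to compose the resulting Set Cover instance with the reduction of Theorem~\ref{thm:set_cover_red}, which already reduces Set Cover to the $(s,z,t)$-Temporal Separator problem in polynomial time. Since the composition of two polynomial-time reductions is again a polynomial-time reduction, this yields the claim with almost no additional work; indeed, the point of this theorem is to serve as a warm-up, so I would deliberately avoid constructing the temporal graph from scratch and instead reuse the existing machinery.

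First I would build a Set Cover instance $(U, \mathcal{S})$ from a DISC-SC instance $(\Gamma, \mathcal{I})$ as follows. Take the universe to be the set of segments, $U = \Gamma$. For each unit interval $I \in \mathcal{I}$, define $S_I = \{ \gamma \in \Gamma : I \text{ covers } \gamma \}$, where $I$ covers $\gamma$ exactly when at least one endpoint of $\gamma$ lies in $I$, and let $\mathcal{S} = \{ S_I : I \in \mathcal{I} \}$. Computing each $S_I$ amounts to checking, for every segment $\gamma \in \Gamma$, whether either endpoint of $\gamma$ falls inside $I$; this is $O(|\Gamma|\cdot|\mathcal{I}|)$ constant-time comparisons in total, so the construction runs in polynomial time.

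Next I would verify the correctness of this first step: a subfamily $A \subseteq \mathcal{I}$ covers all of $\Gamma$ in the DISC-SC sense if and only if $\{ S_I : I \in A \}$ covers $U$ in the Set Cover sense, and the two collections have the same cardinality. This is immediate from the definitions, since a segment $\gamma$ is covered by $A$ precisely when some $I \in A$ satisfies $\gamma \in S_I$. Hence feasible solutions correspond bijectively and size-preservingly, so a minimum set cover of $(U, \mathcal{S})$ yields a minimum covering family $A$, and conversely.

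Finally, I would invoke Theorem~\ref{thm:set_cover_red} on $(U, \mathcal{S})$ to produce, in polynomial time, a temporal graph $f(U,\mathcal{S})$ together with vertices $s,z$ and deadline $t$, whose minimum $(s,z,t)$-temporal separators correspond to minimum set covers of $(U,\mathcal{S})$. Composing the two maps gives the desired polynomial-time reduction from DISC-SC to the $(s,z,t)$-Temporal Separator problem. The only point requiring care is the translation of the ``covers'' relation, namely that endpoint-containment of a segment in a unit interval coincides with membership in $S_I$; but this is entirely routine, so there is no genuine obstacle in this warm-up. The real difficulty, that of forcing the constructed temporal graph to have \emph{bounded pathwidth}, arises only in the subsequent, stronger reduction and is not needed here.
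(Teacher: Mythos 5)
Your proposal is correct, and it takes a genuinely different (more modular) route than the paper. You factor the reduction as DISC-SC $\to$ Set Cover $\to$ $(s,z,t)$-Temporal Separator, reusing Theorem~\ref{thm:set_cover_red}; the paper instead builds the temporal graph directly from the intervals, sorting $\mathcal{I}$ by starting point and exploiting the fact that the intervals covering any fixed point form a contiguous block in that order, so that each segment yields a layer whose $(s,z)$-path passes through at most two contiguous runs of vertices. For the statement as written, your composition is arguably cleaner and the correctness argument is essentially free: the induced temporal graph is, up to relabelling, the same one the paper constructs, since in both cases the layer for a segment is a path through exactly the vertices of the intervals covering it. What the paper's direct construction buys is the structural insight (contiguity of the covering intervals under the left-to-right order) that is irrelevant here but becomes essential in Theorem~\ref{thm:discrete_segment_covering_red2}, where the construction must be reorganized to keep the pathwidth bounded; your generic Set Cover detour discards that structure. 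Two small points to tidy up: the paper's Set Cover instances are assumed to jointly cover the universe, so you should first check (in polynomial time) that every segment is covered by some interval and reject otherwise; and if two intervals induce the same set $S_I$ you should keep $\mathcal{S}$ as an indexed family (or note that picking any preimage preserves cardinality) so that the solution correspondence remains size-preserving in both directions.
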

\begin{proof}
    We denote the starting and ending points of an interval $I$ by $s(I)$ and $e(I)$, respectively. Also, we use this notation for all the segments. 
    Consider a non decreasing order $(I_1, I_2, \dots I_m)$ of all intervals in $\mathcal{I}$ (by their starting times), also consider $(C_1,C_2,\dots,C_n)$ an arbitrary order of segments in $\Gamma$. Based on the fact that the size of all intervals in $\mathcal{I}$ is one, it could be concluded that for any point $p$ and three indices $i<k<j$ if $p \in I_i$ and $p \in I_j$ the $p \in I_k$ since starting point of $I_k$ is before the starting point of $I_j$ and the ending point of $I_k$ is after the ending point of $I_i$. Now we construct a temporal graph $G = (V,E,t \times |\gamma|)$ such that $V = \{v_i| i \in [m]\}$. For any segment $C_j$ we construct the layer $G_{j\times t}$ as follows: 
    
    Let $l_s$ and $r_s$ be the indices of first and last intervals in $\mathcal{I}$ which cover the starting point $s(C_j)$. It is clear that a starting point of $C_j$ is covered by all the intervals between $I_{l_s}$ and $I_{r_s}$. Similarly, $l_e$ and $r_e$ denote the index of the first and the last intervals which cover the ending point $e(C_j)$. Since the ending point $e(C_j)$ is after the starting point $s(C_j)$ we have $l_s \leq l_e$ and $r_s \leq r_e$. So based on $l_e$ and $r_s$ we consider the following two cases:
    
    \begin{figure}
        \centering
        \includegraphics[scale=0.5,angle=0,height=3cm]{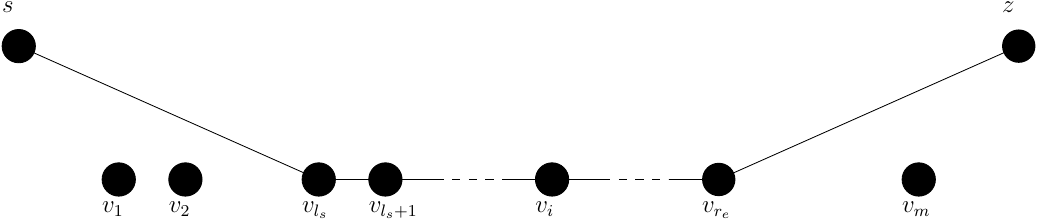}
        \caption{Demonstration of case 1 in the proof of Theorem~\ref{thm:discrete_segment_covering_red1}. Layer $G_{j\times t}$ in case that $l_e \leq r_s$. The time label for all the edges is $j \times t$}
        \label{fig:discrete_case1}
    \end{figure}
    
    \begin{figure}
        \centering
        \includegraphics[scale=0.5,angle=0,height=3cm]{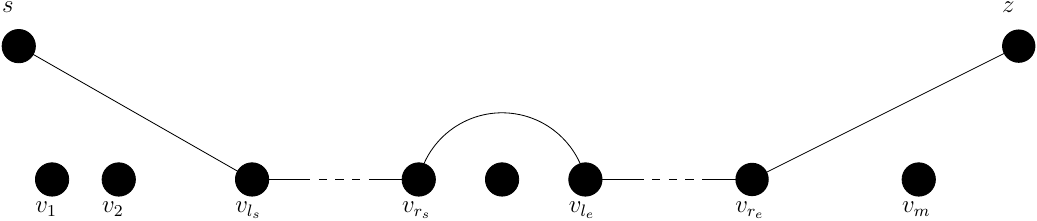}
        \caption{Demonstration of case 1 in the proof of Theorem~\ref{thm:discrete_segment_covering_red1}. Layer $G_{j\times t}$ in case that $r_s < l_e$. The time label for all the edges is $j \times t$}
        \label{fig:discrete_case2}
    \end{figure}

    \noindent\textbf{Case 1}. ($l_e \leq r_s$). In this case, we add the following temporal path which creates the layer $G_{j \times t}$. Figure \ref{fig:discrete_case1} shows this temporal path.
    \begin{gather}
        \label{eq:p_case1}
        (s,v_{l_s},j\times t), (v_{l_s},v_{l_s+1},j\times t),\dots ,
        (v_{r_e-1},v_{r_e},j\times t),(v_{r_e},z,j\times t)
    \end{gather}
    \textbf{Case 2}. ($r_s < l_e$). Similar to the previous case we add a path from $s$ to $z$ which creates the layer $G_{j\times t}$. Figure \ref{fig:discrete_case2} shows this temporal path.
    \begin{gather}
        \label{eq:p_case2}
        \nonumber(s,v_{l_s},j\times t), (v_{l_s},v_{l_s+1},j\times t),\dots ,
        (v_{r_s-1},v_{r_s},j\times t),\\
        (v_{r_s},v_{l_e},j\times t), \\
        \nonumber(v_{l_e},v_{l_e+1},j\times t),\dots,(v_{r_e-1},v_{r_e},j\times t), (v_{r_e},z,j\times t)
    \end{gather}
    
    Suppose that $A \in \mathcal{I}$, and let $S =\{v_i | I_i \in A\}$. We claim that $A$ covers $\Gamma$ if and only if  $S$ is a $(s,z,t)$-temporal separator. 
    
    $\rightarrow$ $A$ is a set of intervals that covers all segments $C_j$. If $l_e \leq r_s$ (Case 1) then there exists an interval $I_i \in A$ such that $ l_s \leq i \leq r_e$ where the temporal path which is shown in equation \ref{eq:p_case1} is incident on $v_i$. On the other hand, if $r_s < l_e$ (Case 2) then there exists an interval $I_i$ such that $l_s\leq i \leq r_s$ or  $l_e\leq i \leq r_e$  where the temporal path which is shown in equation \ref{eq:p_case2} is incident on $v_i$. Therefore, every $(s,z,t)$-temporal path in the temporal graph $G$ contains at least one vertex from $S$ resulting in $S$ being a $(s,z,t)$-temporal separator.
    
    $\leftarrow$ $S$ is a  $(s,z,t)$-temporal separator. Then for any integer $j \in [n]$ a temporal path should be incident on one vertex in $S$ in time $j \times t$. 
    If $l_e \leq r_s$ (Case 1) then there exists $v_i \in S$ such that $l_s \leq i \leq r_e$ which implies that $I_i$  covers $C_j$ and belongs to $A$. If $r_s < l_e$ then there exists $v_i \in S$ such that $l_s\leq i \leq r_s$ or  $l_e\leq i \leq r_e$ which implies that $I_i$ covers $C_j$ and belongs to $A$. Therefore all the segments are covered by an interval in $A$.
\end{proof}

The issue with the above reduction is that it does not provide any structural guarantees about the temporal graph $G$ used in the construction. In order to establish a reduction via a temporal graph $G$ whose underlying graph has bounded pathwidth, we start with a restricted version of DISC-SC, namely, the DISC-SC-$k$ problem. The following results can then be established.

\begin{theorem}
\label{thm:discrete_segment_covering_red2}
    There is a polynomial-time reduction from the DISC-SC-$k$ problem to the $(s,z,t)$-Temporal Separator in which the pathwidth of the underlying graph is bounded by $2k+6$.
\end{theorem}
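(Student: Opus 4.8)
The plan is to keep the correspondence engine of Theorem~\ref{thm:discrete_segment_covering_red1} intact but to completely redesign the spatial layout so that the underlying graph becomes \emph{thin}. As in that warm-up, I would put each segment $C_j$ into its own time layer $G_{j\cdot t}$ as a single $(s,z)$-path through the intervals that cover $s(C_j)$ and $e(C_j)$; the deadline $t$ guarantees that no short temporal path mixes two layers, so (by the two directions of Theorem~\ref{thm:discrete_segment_covering_red1}) a set of interval-vertices is an $(s,z,t)$-temporal separator exactly when the chosen intervals cover $\Gamma$, and the existence of the paths to be hit is testable via Lemma~\ref{lemma:temporalPath}. First I would normalize the DISC-SC-$k$ instance: perturb so that all interval endpoints and all $2n$ segment endpoints are distinct, sort the intervals $I_1,\dots,I_m$ by left endpoint and identify $I_i$ with a vertex $v_i$, and discard any interval whose covered-endpoint set is contained in another's (this preserves the optimum).

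The structural fact I would extract from the length bound is \emph{locality}. Because each segment has length at most $k$ and each interval has length $1$, any two intervals that both touch a common segment have left endpoints within distance $k+2$, and along each layer-path the intervals are visited in increasing order of left endpoint; hence a segment's gadget occupies a window of positions of spatial extent at most $k+2$. This is exactly what makes a width-$(2k+6)$ decomposition conceivable: laid out by position, each edge should span at most $k+2$ columns, each column should carry only a constant number of vertices of the spine, and adding $s$ and $z$ to every bag accounts for the budget $2(k+2)+2=2k+6$.

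The hard part -- and the step I expect to be the real obstacle -- is that the most direct encoding of the disjunction ``cover $s(C_j)$ \emph{or} cover $e(C_j)$'' forces a chord between the last interval covering the left endpoint and the first interval covering the right endpoint. When many intervals cluster near a point these chords are spatially short yet can interleave into an expander-like pattern, driving the pathwidth to $\Omega(m)$ even for trivial covering instances; so the naive shared-spine layout is \emph{not} good enough and the bounded-length hypothesis must be used in the construction, not merely in the analysis. My plan is to replace each such long chord by a \emph{local} connector: route the link between the left-endpoint region and the right-endpoint region entirely inside the $\le k+2$ columns separating them, using $O(1)$ shared lanes per column, and design the connector so that (i) its only minimal $(s,z)$-cuts are the intervals covering $s(C_j)$ together with those covering $e(C_j)$ -- so the correspondence of Theorem~\ref{thm:discrete_segment_covering_red1} is preserved and no auxiliary lane-vertex is ever a cheaper cut -- and (ii) every edge it introduces stays within the window, so bandwidth remains $O(k)$.

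With the connector in hand the bound should follow by exhibiting an explicit path decomposition (in the sense of Section~\ref{sec:prelim}): sweep the columns left to right, place $s$ and $z$ in every bag, and at each position keep only the constantly-many spine and lane vertices whose incident edges cross that column. The edge-span bound $k+2$ and the per-column constant then cap each bag at $2k+6$, and verifying the interval and covering properties of $\beta$ is routine once the connector is fixed. I would expect essentially all the difficulty to concentrate in the connector design and in proving the simultaneous ``bounded span and bounded per-column density'' invariant; the correctness of the reduction itself is a direct reprise of the argument already given for Theorem~\ref{thm:discrete_segment_covering_red1}.
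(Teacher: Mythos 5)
You have correctly located the crux --- the chord joining the intervals that cover $s(C_j)$ to those that cover $e(C_j)$ is what threatens the pathwidth when many intervals cluster near a point --- but the proposal stops exactly where the proof has to begin: the ``local connector'' with properties (i) and (ii) is never constructed, and you yourself concede that essentially all the difficulty concentrates there. As written, the plan also has a quantitative flaw: sweeping by spatial column and keeping ``constantly-many spine and lane vertices whose incident edges cross that column'' cannot work, because even after your normalization a window of length $k+2$ can still contain $\Omega(m)$ pairwise non-dominated intervals (take densely packed segment endpoints and slightly shifted unit intervals), so a single column carries unboundedly many spine vertices; discarding dominated intervals does not thin the instance. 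Any decomposition must therefore be indexed by interval \emph{index} rather than by position, and then the chords again span unboundedly many bags unless something further is done.

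The missing idea, which is the heart of the paper's proof, is a sparse subsample of the intervals: greedily choose a ``special set'' $SP=\{I_{m_1},\dots,I_{m_q}\}$ so that (a) every point covered by $\mathcal{I}$ is covered by $SP$, and (b) consecutive special intervals advance fast enough that $e(I_{m_{i+2k}})>s(I_{m_i})+k+1$; hence for a segment of length at most $k$ the special intervals covering its two endpoints have $SP$-indices differing by at most $2k$. Each interval gets \emph{two} copies $u_i,v_i$; the layer for $C_j$ runs forward along the $v$-copies of the intervals covering $s(C_j)$, doubles back along the $u$-copies to the first special interval covering $s(C_j)$, takes a single crossing edge to the first special interval covering $e(C_j)$, and symmetrically exits to $z$. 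Every vertex on this path corresponds to an interval covering an endpoint of $C_j$, so the covering correspondence survives with no auxiliary vertices that could become cheaper cuts, and every crossing edge joins two special $u$-vertices at $SP$-distance at most $2k$. The bag at index $i$ is then $\{s,z,u_i,v_i,u_{i+1},v_{i+1}\}$ together with a sliding window of $2k+1$ special $u$-vertices, giving width $2k+6$. Your outline is compatible with this in spirit, but without the subsampling step (or an equivalent device) the bound $2k+6$ --- indeed any bound depending only on $k$ --- does not follow.
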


\begin{proof}
    Consider an instance $(\mathcal{I}, \Gamma)$ of the Discrete Segment Covering problem such that the length of all the segments in $\Gamma$ is at most $k$. Consider intervals in $\mathcal{I} = (I_1, I_2, \dots\ I_n)$ in the non-decreasing order of their starting times. We choose a special set of intervals $SP \in \mathcal{I}$ by the following algorithm.
    \begin{enumerate}
        \item Let $SP = {I_1}$ and $index = 1$.
        \item Let $j$ be the largest index such that $s({I_j}) < e({I_{index}})$, if such $j$ exists. Otherwise, let $j = index + 1$
        \item Put $I_j$ into the set $SP$, update the integer $index$ equal to $j$ and if $j \leq n$ repeat the algorithm from step $2$.
    \end{enumerate}
    
    \begin{lemma}
    \label{lemma:covering}
        A $p$ is covered by $\mathcal{I}$ if $ SP$ covers it.
    \end{lemma}
    \begin{proof}
        We prove the lemma by induction. First, based on the algorithm, it is clear that $I_1 \in SP$ and $I_n \in SP$. Now we state the induction hyputhesis: for any $i\in [n]$ such that $I_i \in SP$, a point $p$ is covered by $\{I_1, I_2, \dots, I_i\}$ if it is covered by $\{I_1, I_2, \dots, I_i\} \cap SP$. 
        \begin{itemize}
            \item \textbf{Base case}. For $i = 1$, it is clear that $I_1 \in SP$.
            \item \textbf{Induction step}. Suppose $j<i$ is the largest integer such that $I_j \in SP \cap \{I_1, \ldots I_{i-1}\}$. Based on the inductive assumption, a point $p$ is covered by $\{I_1, I_2, \dots, I_j\}$ if it is covered by $\{I_1, I_2, \dots, I_j\} \cap SP$. Since the starting point of $I_i$ is before the ending point of $I_j$, we have that a point $p$ is covered by $\{I_1, I_2, \dots, I_i\}$ if it is covered by $\{I_1, I_2, \dots, I_i\} \cap SP$.
        \end{itemize}
        Since $I_n \in SP$, a point $p$ is covered by $\{I_1, I_2, \dots, I_n\} = \mathcal{I}$ if it is covered by $\{I_1, I_2, \dots, I_n\} $ $\cap SP = SP$.
    \end{proof}
    The main idea of the proof is based on to the following features of the special set $SP$.
    Denote $SP = \{I_{m_1},I_{m_2}, \dots I_{m_q}\}$. Based on the selection of interval $I_{m_{i+1}}$ it is clear that the starting point of $I_{m_{i+2}}$ is greater than the ending point of $I_{m_{i}}$ which implies that $s({I_{m_{i+2}}}) > s({I_{m_i}}) + 1$. More generally, we have that $e({I_{m_{i+2k}}}) > s({I_{m_i}}) + k+1$. Therefore, for any segment $C \in \Gamma$ and for any interval $I_{m_i}$ and $I_{m_j}$ such that $s(C) \in I_{m_i}$ and $e(C) \in I_{m_j}$, we could conclude that $j \leq i+2k$. This feature for $SP$ is the main idea used in constructing an instance of the $(s,z,t)$-Temporal Separator problem with low pathwidth.
    
    Now we construct a temporal graph $G = (V, E, \tau)$ where $\tau = |\Gamma|\times t$. Let $V = \{u_i| i \in [n]\} \cup \{v_i| i \in [n]\} \cup \{s,z\}$. Now, for the $i$-th segment $C \in \Gamma$ we add a path from $s$ to $z$ at time $i \times t$. Let $m_a$ and $m_b$ be the indices of the first intervals in $SP$ which cover points $s(C)$ and $e(C)$, respectively. Based on the Lemma \ref{lemma:covering} if $m_a$ (or $m_b$) does not exist, then the point $s(C)$ (respectively, $e(C)$) will not be covered by any interval in $\mathcal{I}$. Therefore, we could treat $C$ as a single point $e(C)$ (respectively, $s(C)$) and continue on with the algorithm. Let $l_s$ be the index of the leftmost interval form $\mathcal{I}$ which covers $s(C)$, and let $r_s$ be the index of the rightmost interval from $\mathcal{I}$ which covers $s(C)$. It is obvious that $s(C)$ is covered by all of the intervals between $l_s$ and $r_s$ in $\mathcal{I}$. Similarly, let $l_e$ and $r_e$ be the indices of the leftmost and the rightmost intervals which cover $e(C)$. If $l_e \leq r_s$ then consider $l_e = r_s + 1$ instead. Now, add the  following $(s,z,t)$-temporal path to the temporal graph $G$. For simplicity, we denote $i\times t$ by $\theta$.
    
    \begin{gather}
        (s,u_{l_s},\theta),(u_{l_s},v_{l_s},\theta),(v_{l_s},v_{l_s+1},\theta), \dots (v_{r_s-1},v_{r_s},\theta) \nonumber\\
        (v_{r_s},u_{r_s},\theta),(u_{r_s},u_{r_s-1},\theta), \dots (u_{m_a+1},u_{m_a},\theta) \nonumber \\
        \label{eq:path_for_discsc}(u_{m_a}, u_{m_b}, \theta) \\
        (u_{m_b},u_{m_b-1}, \theta), \dots, (u_{l_e+1}, u_{l_e}, \theta),(u_{l_e},v_{l_e},\theta) \nonumber \\
        (v_{l_e},v_{l_e}+1,\theta) \dots (v_{r_e-1}, v_{r_e},\theta), (v_{r_e},u_{r_e},\theta),(u_{r_e},z,\theta)\nonumber
    \end{gather}
    
    \begin{figure}
        \centering
        \includegraphics[scale=0.5,angle=0,height=4cm]{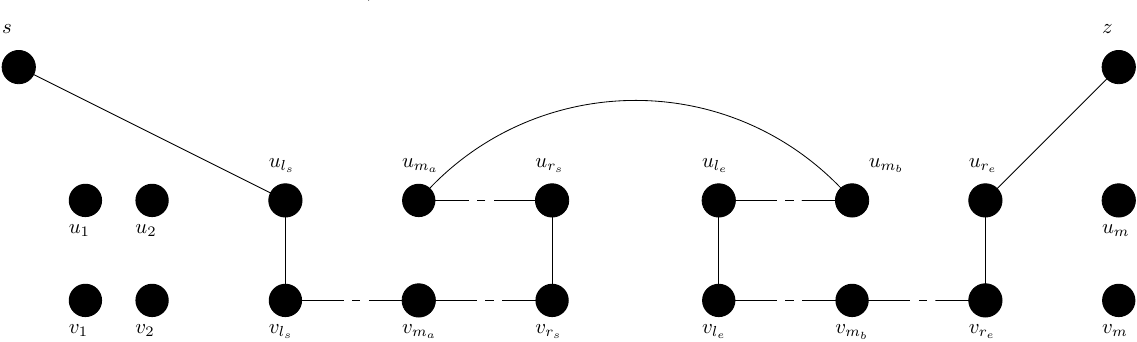}
        \caption{Demonstration of one step of the reduction in the proof of Theorem~\ref{thm:discrete_segment_covering_red2}. The figure shows the $(s,z,t)$-temporal path in the layer $G_{j\times t}$. The time label for all edges is $j \times t$.}
        \label{fig:discrete_temp_path}
    \end{figure}
    
    Figure \ref{fig:discrete_temp_path} shows the above path in the graph layer $i\times t$. We claim that there exists  $A \subseteq \mathcal{I}$ that covers $\Gamma$ with $|A| \leq p$ if and only if there is a $(s,z,t)$-temporal separator $S \subseteq V$ such that $|S| \leq p$.

    $\rightarrow$ Suppose that $A \subseteq \mathcal{I}$ covers all segments in $\Gamma$. Let $S = \{v_i | I_i \in A\}$. It is obvious that $|S| = |A|$. Now we prove that $S$ is a $(s,z,t)$-temporal separator. Suppose that there is a temporal path $P$ in $G$, based on the construction of $G$ this temporal path should be of the form shown in equation \ref{eq:path_for_discsc} for some $i \in [n]$. This implies $I_j \notin A$ for all $j$ such that $l_s < j < r_s$ or $l_e < j < r_e$ and results in the $i$-th segment not being covered by $A$. So, based on the contradiction we could conclude that $S$ is a $(s,z,t)$-temporal separator.
    
    $\leftarrow$ Suppose that $S \subseteq V$ is a $(s,z,t)$-temporal separator in a temporal graph $G$. Let $A = \{I_i| u_i \in S \text{ or } v_i \in S\}$, it is clear that $|A| \leq |S|$. Consider the $i$-th segment $C \in \Gamma$. There should be one vertex belonging to the temporal path $P$ which is shown in equation \ref{eq:path_for_discsc} in $S$ since $S$ is a $(s,z,t)$-temporal separator. Therefore there is $j$ where $l_s < j < r_s$ or $l_e < j < r_e$ and either $u_i$ or $v_i$ belong to $S$, which implies $C \in A$. Thus, $A$ covers $\Gamma$.
    
    \begin{figure}
        \centering
        \includegraphics[scale=0.5,angle=0,height=4cm]{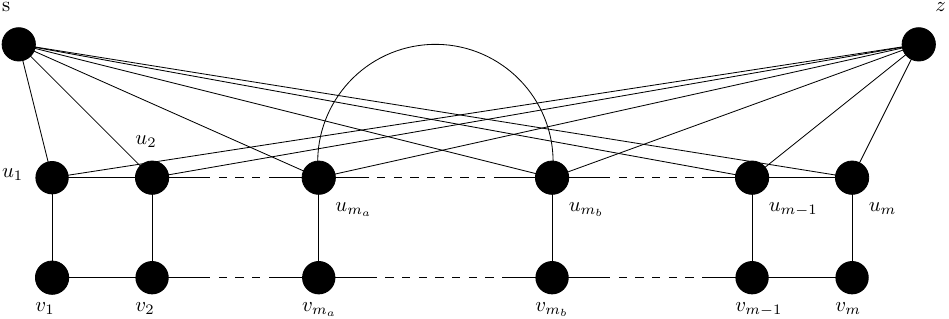}
        \caption{Illustration of the graph $G'$ which is used to show that the output of the reduction from Theorem~\ref{thm:discrete_segment_covering_red2} has bounded pathwidth. The underlying graph $G_{\downarrow}$ is a subgraph of $G'$.}
        \label{fig:discrete_under}
    \end{figure}
    
    Now we prove that the pathwidth of the underlying graph $G_{\downarrow} = (V,E')$ of the temporal graph $G(V,E,|\Gamma|\times t)$ is bounded by $2k+6$. We refer to an edge $(u_{m_a},u_{m_b},\theta)$ in a path that is shown in equation \ref{eq:path_for_discsc} as a \emph{crossing edge}. Figure \ref{fig:discrete_under} shows a graph $G'$ of which  $G_{\downarrow}$ is a subgraph. Now we give a path decomposition $(P,\beta)$ for a graph $G_{\downarrow}$ in which the width of decomposition is at most $2k+6$. Let $V(P) = \{a_1,a_2,\dots, a_m\}$ and $E(P) = \{(a_1,a_2), \dots , (a_{m-1},a(m))\}$. Let $i \in [n]$ and $l(i)$ be the largest integer such that the starting point of the interval $I_{m_{l(i)}} \in SP$ is before the starting point of interval $I_i$. Now we define the $\beta(a_i)$ as follows:
    \begin{gather*}
        \beta(a_i) = \{u_i,v_i,u_{i+1},v_{i+1},s,z\} \cup \{u_{m_l} | l \geq l(i) \text{ and } l\leq l(i)+2k\}
    \end{gather*}
    
    \begin{lemma}
    \label{lemma:path-decompose}
        For any $u_q$ and $i$, $j$, $l$ such that $i<j<l$, if $u_q \in \beta(a_i)$ and $u_q \in \beta(a_l)$, we have $u_q \in \beta(a_j)$.
    \end{lemma}
    \begin{proof}
        If $I_q \notin SP$ then it is clear that $u_q$ only appears in $\beta(a_{q-1})$ and $\beta(a_q)$. Now suppose that $I_1 \in SP$ and $q = m_p$. Since $u_{m_p} \in \beta(a_i)$ we have $m_p \leq l(i)+2k$, also $l(l) \leq m_p$ since $m_p \in \beta(a_l)$. As a result we have $m_p \leq l(i)+2k \leq l(j)+2k$ and $l(j) \leq l(l) \leq m_p$ which implies that $u_q \in \beta(a_j)$. 
    \end{proof}
    
    For any $v_i \in V$ it is clear that $v_i$ just belongs to the two sets $\beta(a_{i-1})$ and $\beta(a_i)$. Also, $s$ and $z$ are present in all the sets. Therefore, by Lemma \ref{lemma:temporalPath} we could say that the third property of path decomposition is satisfied. So, it is sufficient to show that for every edge $(u,v) \in E(G_{\downarrow})$ there exists $i \in [n]$ such that $\{u,v\} \subseteq \beta(a_i)$. If the edges are not crossing edges, then there are three types of edges $(u_i,v_i)$, $(u_i,u_{i+1})$, and $(v_i,v_{i+1})$ which satisfy the condition by the definition of $\beta(a_i)$.
    If $e = (u_i, u_j)$ is a crossing edge, then $I_i \in SP$ and $I_j\in SP$, so let $m_p = i$ and $m_q = j$. Since this edge corresponds to a segment $C$ such that $s(C) \in I_{m_p}$ and $e(C) \in I_{m_q}$ we could conclude that $m_q \leq m_p + 2k$ which implies that ${u_i, u_j} \subseteq \beta(a_i)$.
    
    Also, the cardinality of all sets $\beta(a_i)$ is $2k+7$ which implies that the width of $(P,\beta)$ is  $2k+6$. Therefore the pathwidth of the underlying graph $G_{\downarrow}$ is at most $2k+6$.
\end{proof}

The significance of this result is that if one hopes to design efficient algorithms for the $(s,z,t)$-Temporal Separator with bounded pathwidth one is faced with an obstacle of resolving the hardness of DISC-SC-$k$ problem, as stated in the following theorem.
\begin{theorem}
\label{thm:pathwidth-main}
    If the $(s,z,t)$-Temporal Separator problem on temporal graphs with bounded pathwidth is solvable in polynomial time then the DISC-SC-$k$ problem is solvable in polynomial time.
\end{theorem}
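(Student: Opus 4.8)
The plan is to derive this theorem directly from the reduction established in Theorem~\ref{thm:discrete_segment_covering_red2}, composing it with the assumed polynomial-time algorithm. The crucial observation is that since $k$ is a fixed constant, the bound $2k+6$ on the pathwidth of the underlying graph produced by the reduction is itself a constant, so the output of the reduction genuinely falls into the class of temporal graphs with bounded pathwidth on which we are assuming a polynomial-time algorithm exists.

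First I would set up the chain. Given an arbitrary instance $(\mathcal{I}, \Gamma)$ of DISC-SC-$k$, apply the polynomial-time reduction from Theorem~\ref{thm:discrete_segment_covering_red2} to obtain a temporal graph $G = (V, E, |\Gamma| \times t)$ together with the distinguished vertices $s, z$ and the integer $t$. By that theorem, $G_\downarrow$ has pathwidth at most $2k+6$, and the reduction guarantees a size-preserving correspondence: there is a cover $A \subseteq \mathcal{I}$ of $\Gamma$ with $|A| \le p$ if and only if there is an $(s,z,t)$-temporal separator $S \subseteq V$ with $|S| \le p$.

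Next I would invoke the assumed algorithm. Let $\mathcal{A}$ be the hypothesized polynomial-time algorithm that computes a minimum $(s,z,t)$-temporal separator on temporal graphs of bounded pathwidth. Running $\mathcal{A}$ on $G$ returns a minimum separator $S^\ast$ in polynomial time. I would then translate $S^\ast$ back into a segment cover via the map used in the reduction, namely $A^\ast = \{I_i : u_i \in S^\ast \text{ or } v_i \in S^\ast\}$. By the size-preserving equivalence, $A^\ast$ covers $\Gamma$, and it is of minimum size among all such covers: any strictly smaller cover would, by the same correspondence, yield a strictly smaller separator, contradicting the optimality of $S^\ast$.

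Finally I would argue that the total running time is polynomial: the reduction runs in polynomial time, $\mathcal{A}$ runs in polynomial time on the bounded-pathwidth instance, and the back-translation is a single linear scan over $S^\ast$. There is no genuine obstacle in this argument---all the difficulty was absorbed into Theorem~\ref{thm:discrete_segment_covering_red2}, whose construction of the special interval set $SP$ and the crossing-edge gadget is precisely what secures the $2k+6$ pathwidth bound. The only point requiring care is the standing assumption that $k$ is fixed, so that a pathwidth bound depending only on $k$ still qualifies as ``bounded''; if instead $k$ were part of the input, the reduced instances would not have bounded pathwidth and the implication would fail.
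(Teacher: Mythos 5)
Your proposal is correct and matches the paper's (implicit) argument: the paper states this theorem as an immediate consequence of Theorem~\ref{thm:discrete_segment_covering_red2}, and your composition of that size-preserving reduction with the hypothesized algorithm, together with the observation that $2k+6$ is a constant for fixed $k$, is exactly the intended proof.
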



\section{Conclusions}
\label{sec:conclusions}
In this work, we defined the $(s,z,t)$-Temporal Separator problem, generalizing the $(s,z)$-Temporal Separator problem. We showed that $(s,z)$-Temporal Separator and $(s,z,t)$-Temporal Separator problems could be approximated within $\tau$ and $\tau^2$ approximation ratio, respectively, in a graph with lifetime $\tau$. We also presented a lower bound $\Omega(\log(n) + \log(\tau))$ for polynomial time approximability of $(s,z,t)$-Temporal Separator assuming that $\mathcal{NP}\not\subset\mbox{\sc Dtime}(n^{\log\log n})$. Then we considered special classes of graphs. We presented two efficient algorithms: one for temporal graphs $G$ with $bw(G_\downarrow) \le 2$ and one for temporal graphs $G$ with $G_\downarrow \setminus \{s,z\}$ being a tree. The question of whether there is a polynomial-time algorithm to compute a minimum $(s,z,t)$-temporal separator in a temporal graph of bounded treewidth remains an interesting open problem. However, we showed a reduction from the DISC-SC-$k$ problem to $(s,z,t)$-Temporal Separator when the pathwidth of the underlying graph is bounded by a constant number. Therefore, designing efficient algorithms for bounded treewidth graphs encounters serious obstacles, such as making progress on the open problem of the hardness of DISC-SC-$k$. Another interesting direction of future research is to consider temporal separator problems with the additional restriction of ``balancedness'', as discussed at the end of Section~\ref{sec:related}.

\bibliographystyle{plainurl}
\bibliography{bibfile}

\begin{thebibliography}{10}

\bibitem{albers2003online}
Susanne Albers.
\newblock Online algorithms: a survey.
\newblock {\em Mathematical Programming}, 97(1-2):3--26, 2003.

\bibitem{althoby2019}
Haeder~Y. Althoby, Mohamed {Didi Biha}, and André Sesboüé.
\newblock Exact and heuristic methods for the vertex separator problem.
\newblock {\em Computers and Industrial Engineering}, 139:106135, 2020.
\newblock URL:
  \url{https://www.sciencedirect.com/science/article/pii/S0360835219306047},
  \href {http://dx.doi.org/https://doi.org/10.1016/j.cie.2019.106135}
  {\path{doi:https://doi.org/10.1016/j.cie.2019.106135}}.

\bibitem{anagnostopoulos2012algorithms}
Aris Anagnostopoulos, Ravi Kumar, Mohammad Mahdian, Eli Upfal, and Fabio
  Vandin.
\newblock Algorithms on evolving graphs.
\newblock In {\em Proceedings of the 3rd Innovations in Theoretical Computer
  Science Conference}, pages 149--160, 2012.

\bibitem{bergren2020covering}
Dan Bergren, Eduard Eiben, Robert Ganian, and Iyad Kanj.
\newblock On covering segments with unit intervals.
\newblock In {\em 37th International Symposium on Theoretical Aspects of
  Computer Science (STACS 2020)}. Schloss Dagstuhl-Leibniz-Zentrum f{\"u}r
  Informatik, 2020.

\bibitem{bodlaender1997}
Hans~L. Bodlaender and Dimitrios~M. Thilikos.
\newblock Constructive linear time algorithms for branchwidth.
\newblock In Pierpaolo Degano, Roberto Gorrieri, and Alberto
  Marchetti-Spaccamela, editors, {\em Automata, Languages and Programming},
  pages 627--637, Berlin, Heidelberg, 1997. Springer Berlin Heidelberg.

\bibitem{casteig}
Arnaud Casteigts, Paola Flocchini, Walter Quattrociocchi, and Nicola Santoro.
\newblock Time-varying graphs and dynamic networks.
\newblock {\em International Journal of Parallel, Emergent and Distributed
  Systems}, 27(5):387--408, 2012.

\bibitem{clrs}
Thomas~H Cormen, Charles~E Leiserson, Ronald~L Rivest, and Clifford Stein.
\newblock {\em Introduction to algorithms}.
\newblock MIT press, 2009.

\bibitem{deng2013multi}
Xiaojie Deng, Bingkai Lin, and Chihao Zhang.
\newblock Multi-multiway cut problem on graphs of bounded branch width.
\newblock In {\em Frontiers in Algorithmics and Algorithmic Aspects in
  Information and Management}, pages 315--324. Springer, 2013.

\bibitem{enright2018deleting}
Jessica Enright, Kitty Meeks, George~B Mertzios, and Viktor Zamaraev.
\newblock Deleting edges to restrict the size of an epidemic in temporal
  networks.
\newblock {\em arXiv preprint arXiv:1805.06836}, 2018.

\bibitem{feige1998threshold}
Uriel Feige.
\newblock A threshold of ln n for approximating set cover.
\newblock {\em Journal of the ACM (JACM)}, 45(4):634--652, 1998.

\bibitem{feigenbaum2004graph}
Joan Feigenbaum, Sampath Kannan, Andrew McGregor, Siddharth Suri, and Jian
  Zhang.
\newblock On graph problems in a semi-streaming model.
\newblock In {\em International Colloquium on Automata, Languages, and
  Programming}, pages 531--543. Springer, 2004.

\bibitem{feigenbaum2005graph}
Joan Feigenbaum, Sampath Kannan, Andrew McGregor, Siddharth Suri, and Jian
  Zhang.
\newblock On graph problems in a semi-streaming model.
\newblock {\em Departmental Papers (CIS)}, page 236, 2005.

\bibitem{ferreira}
Afonso Ferreira.
\newblock Building a reference combinatorial model for manets.
\newblock {\em IEEE network}, 18(5):24--29, 2004.

\bibitem{flocchini}
Paola Flocchini, Bernard Mans, and Nicola Santoro.
\newblock Exploration of periodically varying graphs.
\newblock In {\em International Symposium on Algorithms and Computation}, pages
  534--543. Springer, 2009.

\bibitem{fluschnik2020temporal}
Till Fluschnik, Hendrik Molter, Rolf Niedermeier, Malte Renken, and Philipp
  Zschoche.
\newblock Temporal graph classes: A view through temporal separators.
\newblock {\em Theoretical Computer Science}, 806:197--218, 2020.

\bibitem{garg1994multiway}
Naveen Garg, Vijay~V Vazirani, and Mihalis Yannakakis.
\newblock Multiway cuts in directed and node weighted graphs.
\newblock In {\em International Colloquium on Automata, Languages, and
  Programming}, pages 487--498. Springer, 1994.

\bibitem{kempe2002connectivity}
David Kempe, Jon Kleinberg, and Amit Kumar.
\newblock Connectivity and inference problems for temporal networks.
\newblock {\em Journal of Computer and System Sciences}, 64(4):820--842, 2002.

\bibitem{mertzios}
George~B Mertzios, Othon Michail, Ioannis Chatzigiannakis, and Paul~G Spirakis.
\newblock Temporal network optimization subject to connectivity constraints.
\newblock In {\em International Colloquium on Automata, Languages, and
  Programming}, pages 657--668. Springer, 2013.

\bibitem{michail2016introduction}
Othon Michail.
\newblock An introduction to temporal graphs: An algorithmic perspective.
\newblock {\em Internet Mathematics}, 12(4):239--280, 2016.

\bibitem{robertson1991graph}
Neil Robertson and Paul~D Seymour.
\newblock Graph minors. x. obstructions to tree-decomposition.
\newblock {\em Journal of Combinatorial Theory, Series B}, 52(2):153--190,
  1991.

\bibitem{ROBERTSON198339}
Neil Robertson and P.D. Seymour.
\newblock Graph minors. i. excluding a forest.
\newblock {\em Journal of Combinatorial Theory, Series B}, 35(1):39--61, 1983.
\newblock URL:
  \url{https://www.sciencedirect.com/science/article/pii/0095895683900795},
  \href {http://dx.doi.org/https://doi.org/10.1016/0095-8956(83)90079-5}
  {\path{doi:https://doi.org/10.1016/0095-8956(83)90079-5}}.

\bibitem{rossi2013modeling}
Ryan~A Rossi, Brian Gallagher, Jennifer Neville, and Keith Henderson.
\newblock Modeling dynamic behavior in large evolving graphs.
\newblock In {\em Proceedings of the sixth ACM international conference on Web
  search and data mining}, pages 667--676, 2013.

\bibitem{sleator1983data}
Daniel~D Sleator and Robert~Endre Tarjan.
\newblock A data structure for dynamic trees.
\newblock {\em Journal of computer and system sciences}, 26(3):362--391, 1983.

\bibitem{wu2014path}
Huanhuan Wu, James Cheng, Silu Huang, Yiping Ke, Yi~Lu, and Yanyan Xu.
\newblock Path problems in temporal graphs.
\newblock {\em Proceedings of the VLDB Endowment}, 7(9):721--732, 2014.

\bibitem{wu2016efficient}
Huanhuan Wu, James Cheng, Yiping Ke, Silu Huang, Yuzhen Huang, and Hejun Wu.
\newblock Efficient algorithms for temporal path computation.
\newblock {\em IEEE Transactions on Knowledge and Data Engineering},
  28(11):2927--2942, 2016.

\bibitem{xuan2003computing}
B~Bui Xuan, Afonso Ferreira, and Aubin Jarry.
\newblock Computing shortest, fastest, and foremost journeys in dynamic
  networks.
\newblock {\em International Journal of Foundations of Computer Science},
  14(02):267--285, 2003.

\bibitem{zschoche2020complexity}
Philipp Zschoche, Till Fluschnik, Hendrik Molter, and Rolf Niedermeier.
\newblock The complexity of finding small separators in temporal graphs.
\newblock {\em Journal of Computer and System Sciences}, 107:72--92, 2020.

\end{thebibliography}

\end{document}